\newtheorem{theorem}{Theorem}[section]
\newtheorem{proposition}{Proposition}[section]
\newtheorem{definition}{Definition}[section]
\numberwithin{equation}{section}
\begin{document}

\begin{frontmatter}

\title{ALKPU: an active learning method for the DeePMD model with Kalman filter} 

\author[1]{Haibo Li}
\ead{haibo.li@unimelb.edu.au}

\author[2]{Xingxing Wu }
\ead{stars_sparkling@163.com}

\author[3,6]{Liping Liu}
\ead{liuliping@semi.ac.cn}

\author[3,6]{Lin-Wang Wang}
\ead{lwwang@semi.ac.cn}

\author[4]{Long Wang}
\ead{wanglong82@huawei.com}

\author[5,6]{Guangming Tan}
\ead{tgm@ict.ac.cn}

\author[5,6]{Weile Jia \corref{cor1}}
\ead{jiaweile@ict.ac.cn}

\affiliation[1]{organization={School of Mathematics and Statistics, The University of Melbourne},
	city={Melbourne, VIC},
	postcode={3010},
	country={Australia}}
\affiliation[2]{organization={Technology Department, Longxun Quantum Inc.},
	city={Beijing},
	postcode={100192},
	country={China}}
\affiliation[3]{organization={Institute of Semiconductors, Chinese Academy of Sciences},
	city={Beijing},
	postcode={100083},
	country={China}}
\affiliation[4]{organization={Computing System Optimization Lab, Huawei Technologies Co., Ltd.},
	city={Beijing},
	postcode={100094},
	country={China}}
\affiliation[5]{organization={State Key Lab of Processors, Institute of Computing Technology, Chinese Academy of Sciences},
	city={Beijing},
	postcode={100190},
	country={China}}
\affiliation[6]{organization={University of Chinese Academy of Sciences},
	city={Beijing},
	postcode={101408},
	country={China}}


\cortext[cor1]{Corresponding author}

\begin{abstract}
  Neural network force field models such as DeePMD have enabled highly efficient large-scale molecular dynamics simulations with \textit{ab initio} accuracy. However, building such models heavily depends on the training data obtained by costly electronic structure calculations, thereby it is crucial to carefully select and label the most representative configurations during model training to improve both extrapolation capability and training efficiency. To address this challenge, based on the Kalman filter theory we propose the \textit{Kalman Prediction Uncertainty (KPU)} to quantify uncertainty of the model's prediction. With KPU we design the \textit{Active Learning by KPU (ALKPU)} method, which can efficiently select representative configurations that should be labelled during model training. We prove that ALKPU locally leads to the fastest reduction of model's uncertainty, which reveals its rationality as a general active learning method. We test the ALKPU method using various physical system simulations and demonstrate that it can efficiently coverage the system's configuration space. Our work demonstrates the benefits of ALKPU as a novel active learning method, enhancing training efficiency and reducing computational resource demands.
\end{abstract}


\begin{keyword}
molecular dynamics \sep neural network force field \sep active learning \sep Kalman filter \sep uncertainty quantification \sep Kalman prediction uncertainty
\end{keyword}

\end{frontmatter}


\section{Introduction}\label{sec1}
The \textit{ab initio} molecular dynamics (AIMD) is a powerful computational tool that allows the simulation of dynamical processes involving atoms and molecules to gain a detailed understanding of physical and chemical properties of complex systems with quantum mechanics precision, such as chemical bonding and formation of reaction intermediates for metal alloys, ceramics and polymers \cite{nose1984,323_1985Computer,development1996,raty2005growth,aminpour2019}. However, AIMD is computationally expensive due to the need for electronic structure calculations to obtain the energy and forces of the system at each step, which limits the scale of the simulated system and time duration \cite{PhysRevLett.55.2471,marx_hutter_2009,pwmat1}. Recently, the deep learning methods provide tools to combine traditional AIMD with a neural network (NN) to develop neural network force field (NNFF) \cite{blank1995neural,2007Generalized,doi:10.1063/1.3553717,behler2021four}, in which an NN model is designed that maps configurations of the simulated system to their respective potential energies and forces. The well-trained NNFF model can be used to rapidly generate energies and forces to drive molecular dynamics (MD), improving the speed and scale of simulations while maintaining \textit{ab initio} accuracy. This approach has been effectively used for simulating complex condensed-phase systems as well as molecular transport studies \cite{Representing2004,2019Neural,friederich2021machine}.

 The performance of an NNFF depends heavily on the quality of training data generated by electronic structure calculation softwares such as VASP and PWmat \cite{hutchinson2012vasp,pwmat1,pwmat2}, and efficient sampling strategies play a crucial role in ensuring that the model is accurate and reliable across the entire range of configurations of interest \cite{2020exploring}. Thus, it is imperative to find such a method that selects informative and representative configurations to build a relative small training dataset with which we can train a model with the desired performance \cite{smith2018less,zhang2019active}. Active learning is a technique that concurrently selects the most informative data samples for labeling during training, which can significantly reduce the labeling effort and save much training time required to reach a given level of performance. 

Many NNFF models and packages have been developed, such as HDNNP~\cite{Behler_2014, doi:10.1063/1.3553717, behler2017first}, BIM-NN~\cite{doi:10.1021/acs.jpclett.7b01072}, Schnet~\cite{Schutt2017}, Enn-s2s~\cite{gilmer2017neural}, DeepMD-kit~\cite{wang2018deepmd}, SIMPLE-NN~\cite{lee2019simple}, DimeNet++~\cite{gasteiger_dimenet_2020, gasteiger_dimenetpp_2020}, PaiNN~\cite{schutt2021equivariant}, SpookyNet~\cite{Unke2021}, CabanaMD-NNP \cite{DESAI2022108156}, NewtonNet~\cite{haghighatlari2022newtonnet} and NequIP~\cite{Batzner2022}. The Deep Potential Molecular Dynamics (DeePMD) is one of the state-of-the-art model among them \cite{zhang2018deep,wang2018deepmd,zhang2018end,jia2020pushing}. It contains an embedding network to learning symmetric invariant features \cite{doi:10.1063/1.3553717,PhysRevB.87.184115} from input configurations and a fitting network to map the feature to potential energies and forces \cite{zhang2018end}. The model is trained using the Adam optimizer \cite{2014Adam} and an active learning method called DP-GEN is designed to select the most relevant portions of the high-dimensional configuration space \cite{zhang2020dp}. DP-GEN has two main parts: exploration and selection. Exploration involves using the temporary trained model to run several steps MD, while selection selects representative configurations from this MD trajectory that should be labeled to enlarge the training dataset. DP-GEN trains four models with different initial NN parameters, then uses the predictions of the trained model to calculate an empirical variance, and selects data points with high variances. Although it has been applied to many physical system simulations, DP-GEN has two main problems. Firstly, it incurs substantial computational expense due to the training of four models. Secondly, there is a consistency-error problem when using multiple models to calculate variance, which means that all four trained models may make consistently incorrect predictions for a given configuration \cite{zhang2020machine}, potentially leading to unnecessary redundant or missed acquisitions of configurations that the model has not covered.

In this paper, based on the RLEKF optimizer \cite{hu2022rlekf} that exploits the extended Kalman filter (EKF) \cite{kalman1960contributions,KF155276,KF728124,haykin2001kalman,chui2017kalman} to train DeePMD, we develop a new active learning method. By taking advantages of the Kalman filter's capabilities of estimating the probability distribution of NN parameters, we propose a metric to quantify uncertainty of the model's prediction called the Kalman Prediction Uncertainty (KPU), which is an approximate variance of the model's prediction. In the framework of the RLEKF optimizer, we propose an efficient algorithm to calculate KPU, then with KPU we design the Active Learning by KPU (ALKPU) method for DeePMD. After the exploration step, ALKPU computes the KPU for each configuration in the new MD trajectory, and configurations with higher KPUs are selected for labeling and then added to the training dataset concurrently during model training. Comparing with DP-GEN's four-model method, ALKPU is more appropriate to quantify uncertainty of the model's prediction and trains only one model, thus can save large computational resource overheads while lead to a better selection of representative data points. The main contributions of this work are summarized in the following.

\subsection{Contributions}
\begin{itemize}
  \item Based on the EKF for training NN, we propose the KPU to quantify uncertainty of the NN's predictions. In the framework of the RLEKF optimizer for training DeePMD, we design an efficient algorithm to calculate the KPU of forces predicted by the temporary trained model. 
  \item We propose the ALKPU active learning method for DeePMD and develop a sampling platform that can concurrently exploring, selecting and labeling representative configurations during training. Several typical physical system are used to test the ALKPU method and show its advantages as a new active learning method.
  \item We establish some theoretical results about ALKPU with mathematical proofs by using the Fisher information matrix and Shannon entropy. The results show that ALKPU is essentially a locally fastest reduction procedure of NN model's uncertainty, which reveals the rationality of ALKPU as a general active learning method. 
\end{itemize}

\subsection{Related work}
\paragraph*{Active learning for neural networks}
Active learning is a widely studied topic, with early work found in \cite{2010Active}. The center of active learning is the query strategy that decides which data points are informative and representative that should be selected and labelled \cite{angluin2001queries}. DP-GEN belongs to the ``query by committee'' family where a variety of models are trained on the current labeled data and then votes on the output for unlabeled data and labels those points for which the ``committee'' disagrees the most \cite{seung1992query}. ALKPU belongs to the ``uncertainty sampling'' family that labels those points for which the current model is the least certain about regarding the correct output \cite{Faria2022}. Other query strategies include the variance reduction that labels points that would minimize output variance \cite{mackay1992information,cohn1993neural,cohn1996active}, the expected error reduction that label points that would mostly reduce the model's generalization error \cite{roy2001toward} and the expected model change that labels points that would mostly change the current model \cite{settles2007multiple}. Statistical learning methods such as the Gaussian regression process and Bayesian neural network can give better uncertainty measures \cite{rasmussen2006gaussian,deringer2021gaussian}, but it is hard to directly applied them to deep neural networks. Recent research \cite{gal2016dropout,li2017dropout,gal2017concrete} has demonstrated connections between the dropout based active learning and approximate Bayesian inference, making it possible to apply Bayesian active learning methods to deep learning.
 
\paragraph*{Active learning in MD simulations}
Active learning is particularly useful for data-extensive scenarios such as chemical-space exploration in MD simulations. One effective algorithm proposed in \cite{botu2015adaptive} to actively learn energies and forces for MD of bulk aluminium and aluminium surfaces employs Euclidean distance between configurations as a selection criterion. The ``query by committee'' was applied first for NNFFs in \cite{behler2015constructing} where a committee of NN models with different architectures but comparable average performances is selected. In contrast to NNs, in \cite{simm2018error} the prediction variance of Gaussian regression process was used to learn complex chemical reaction mechanisms. This approach was further extended to enable the generation of multiple labels simultaneously, providing a significant acceleration of matter simulations \cite{proppe2019gaussian}. For the linearly parametrized interatomic potential \cite{podryabinkin2017active}, an active learning approach using the D-optimality criterion for selecting configurations was designed. Additionally, active learning approaches have been applied to accelerate chemical-space exploration based on the Thompson sampling \cite{hernandez2017parallel} and to speed up exploration of potential energy surfaces by synthesizing representations of unseen configurations \cite{gomez2018automatic}.

\section{The DeePMD model and RLEKF optimizer}\label{sec2}
The state-of-the-art NNFF used in this work is the DeePMD model \cite{zhang2018deep,zhang2018end}. The two main parts of DeePMD are the embedding net and fitting net. For each atom $i$ in a given configuration, the physical symmetries such as translational, rotational, and permutational invariance are integrated into the descriptor $\mathcal{D}_{i}$ through a fully connected network named embedding net. Then the descriptor passes through a fully connected network named fitting net to approximate local atomic energy. The network structure can be described as follows.

\textbf{(1).} For each configuration of the given system consisting Cartesian coordinates $\mathbf{r}_{i}=(x_{i},y_{i},z_{i})\in \mathbb{R}^{3}$ for $i=1,\dots,N_{a}$, the input of DeePMD is the neighbor lists $\mathcal{R}_{i}=\{\mathbf{r}_{ij}=\mathbf{r}_{j}-\mathbf{r}_{i}:|\mathbf{r}_{ij}| <r_{c}\}$ for all $i$, where $r_c$ is the cutoff radius. Then $\mathcal{R}_{i}$ is translated into the smooth version $\tilde{\mathcal{R}}_{i} \in \mathbb{R}^{N_{m}\times4}$ with $(\tilde{\mathcal{R}}_{i})_{j}=s(|\mathbf{r}_{ij}|)(1,\mathbf{r}_{ij}/|\mathbf{r}_{ij}|)\in \mathbb{R}^4$, where $N_{m}$ is the maximum length of all neighbor lists and $j$ is the neighbor index of atom $i$. The smoothing function $s(r)$ satisfies $s(r)=1/r$ for $x<r_{cs}$ and $s(r)=0$ for $x>r_{c}$, and it decaying smoothly between the two thresholds.

\textbf{(2).} Define the mapping of the embedding network as $\mathcal{G}=\mathcal{E}_{2}\circ\mathcal{E}_{1}\circ\mathcal{E}_{0}$, where $\mathcal{E}_{0}(\boldsymbol{x})=\tanh \left(\mathbf{\boldsymbol{x}}\otimes W_{0}+\boldsymbol{1}\otimes w_{0}\right)$ with $W_0\in \mathbb{R}^{M_1}$ and $\mathcal{E}_{l}(X)=X+\tanh \left(XW_{l}+\boldsymbol{1}\otimes w_{l}\right)$ with $W_{l}\in \mathbb{R}^{N_{m} \times M_1}$ for $l=1,2$. The notation ``$\otimes$'' is the cross product between vectors $\boldsymbol{1}\in \mathbb{R}^{N_{m}}$ and $w_{l} \in \mathbb{R}^{M_1}$, and $\tanh$ is element-wise. Then we define the local embedding matrix $\mathcal{G}_{i}\in \mathbb{R}^{N_{m}\times M_1}$ with its elements $(\mathcal{G}_{i})_{jk}=\mathcal{G}(s(|\mathbf{r}_{ij}|))_k$. The descriptor of atom $i$ is defined as $\mathcal{D}_{i}:=\mathcal{G}_{i1}^{T}\tilde{\mathcal{R}}_{i}\tilde{\mathcal{R}}_{i}^{T}\mathcal{G}_{i2}\in \mathbb{R}^{M_1\times M_2}$. In practice, we take $\mathcal{G}_{i1}=\mathcal{G}_{i}$ and take the first $M_2$ ($<M_1$) columns of $\mathcal{G}_i$ to form $\mathcal{G}_{i2}$.

\textbf{(3).} The mapping of the fitting network is defined as $\mathcal{F}(\mathcal{D}_{i})=\mathcal{F}_{3}\circ\mathcal{F}_{2}\circ\mathcal{F}_{1}\circ\mathcal{F}_{0}(\mathcal{D}_{i})$ with $\mathcal{D}_{i}$ reshaped into a vector of form $\mathbb{R}^{M_1 M_2}$, where $\mathcal{F}_{0}(\boldsymbol{x})=\tanh \left(\tilde{W}_{0}\boldsymbol{x}+\tilde{w}_{0}\right)$ with $\tilde{w}_{0} \in \mathbb{R}^{d}$, $\tilde{W}_{0}\in \mathbb{R}^{d \times M_1 M_2}$, $\mathcal{F}_{3}(\boldsymbol{x})=\tilde{W}_{3}\boldsymbol{x}+\tilde{w}_{3}$ with $\tilde{w}_{3} \in \mathbb{R}$, $\tilde{W}_{3}\in \mathbb{R}^{1 \times d}$ and $\mathcal{F}_{l}(\boldsymbol{x})=\boldsymbol{x}+\tanh \left(\tilde{W}_{l}\boldsymbol{x}+\tilde{w}_{l}\right)$ with $\tilde{w}_{l} \in \mathbb{R}^{d}, \tilde{W}_{l}\in \mathbb{R}^{d \times d}$ for $l=1,2$.

\textbf{(4).} The output $E_{i}=\mathcal{F}(\mathcal{D}_{i})$ is the local energy corresponding to atom $i$. Finally the total energy of this system is $E=\sum_{i}E_{i}$ and the force for each atom is $\boldsymbol{F}_{i}=-\nabla_{\mathbf{r}_{i}}E$ that can be computed by backward propagation.

The DeePMD-kit software \cite{wang2018deepmd} exploits the Adam optimizer \cite{2014Adam} to train the NN parameters with the loss function be a weighted sum of energy loss and force loss. In this work, the model is trained by the RLEKF optimizer proposed in \cite{hu2022rlekf}, which is the abbreviation of the \textit{reorganized layer extended Kalman filter}. The idea of exploiting EKF to train NN is as follows. Let $\boldsymbol{w}$ denotes the vector including all the well-trained parameters. The NN model can be modelled as 
\begin{equation*}
  \begin{cases}
    \boldsymbol{v}_{t}=\boldsymbol{v}_{t-1}=\boldsymbol{w}, \\
    y_{t}=h(\boldsymbol{v}_{t},x_{t})+\eta_{t},
  \end{cases}
\end{equation*}
where $\{(x_{t},y_{t})\}_{t\in \mathbb{N}}$ are training data pairs, $h(\cdot, \cdot)$ is the nonlinear NN mapping, and $\eta_{t}$ is the noise for the NN model. In order to avoid overfitting of earlier data, a fading memory strategy is used, which leads to the following model:
\begin{equation}\label{fade}
  \begin{cases}
    \boldsymbol{v}_{t}=\lambda_{t}^{-1/2}\boldsymbol{v}_{t-1}, \ \ \
    \boldsymbol{v}_{0} \sim \mathcal{N}(\hat{\boldsymbol{w}}_0, P_0), \\
    y_{t}=h(\alpha_{t}^{-1}\boldsymbol{v}_{t},x_{t})+\eta_{t},  \ \ \  \eta_{t}\sim{\mathcal {N}} (0, R_{t}),
  \end{cases}
\end{equation}
where $0< \lambda_{i}\leq 1$ are called fading memory factors, $\boldsymbol{v}_{0}=\boldsymbol{w}$ and $\mathcal{N}(\hat{\boldsymbol{w}}_0, P_0)$ is the initial distribution of $\boldsymbol{w}$. Using the relation $\boldsymbol{v}_t=\alpha_{t}\boldsymbol{w}$ with $\alpha_t=\lambda_{t}^{-1/2}\cdots\lambda_{1}^{-1/2}$, the $t$-th trained parameter $\hat{\boldsymbol{w}}_t$ is defined as $\hat{\boldsymbol{w}}_t=\alpha_{t}^{-1}\hat{\boldsymbol{v}}_{t}$ with $\hat{\boldsymbol{v}}_{t}$ computed by EKF. The procedure is summarized in Algorithm \ref{EKFfm}.

\begin{algorithm}
  \caption{EKF with fading memory (EKFfm) for training NN}\label{EKFfm}
  \begin{algorithmic}[1]
    \Require $\hat{\boldsymbol{w}}_0$, $P_0$, $\lambda_1$, $\nu$
    \For{$t=1,2,\ldots,T$}
    \State $\bar{P}_{t} = P_{t-1}\lambda_{t}^{-1}$
    \State $H_t = \mathrm{D}_{\boldsymbol{w}}h(\hat{\boldsymbol{w}}_{t-1},x_t)$  \Comment{Differential (Jacobian) with respect to $\boldsymbol{w}$}
    \State $A_t = H_{t}\bar{P}_{t}H_{t}^T+\alpha_{t}^{2}R_{t}$, \ \
    $\alpha_t = \lambda_{t}^{-1/2}\cdots\lambda_{1}^{-1/2}$
    \State $K_t = \bar{P}_{t}H_{t}^{T}A_{t}^{-1}$
    \State $P_{t} = (I-K_{t}H_{t})\bar{P}_{t}$ \Comment{Covariance matrix of estimated $\hat{\boldsymbol{v}}_{t}$}
    \State $d_t = y_t - h(\hat{\boldsymbol{w}}_{t-1}, x_t)$
    \State $\hat{\boldsymbol{w}}_{t} = \hat{\boldsymbol{w}}_{t-1}+K_{t}d_t$ \Comment{Trained parameter}
    \State $\lambda_{t+1} = \lambda_{t} \nu +1- \nu $ \Comment{$0<\nu<1$}
    \EndFor
    \Ensure $\hat{\boldsymbol{w}}_T$, $P_T$, $\lambda_{T+1}$
  \end{algorithmic}
\end{algorithm}

The EKFfm is further optimized by elaborately reorganizing the NN layers by splitting big and gathering adjacent small ones to approximate a dense matrix by a sparse diagonal block matrix, and the energy and force are trained alternatively where the prediction for force is transformed to a scalar, which makes $A_t$ a scalar to avoid matrix inversions. After code optimizations, RLEKF converges faster with slightly better accuracy for both force and energy than Adam \cite{hu2022rlekf}. The detailed derivation of EKFfm and descriptions of RLEKF are described in \ref{sec:A}.

\section{Active learning by Kalman Prediction Uncertainty}\label{sec3}
Active learning minimizes labeling costs while maximizing modeling accuracy. While there are various methods in active learning literature, we propose to label the data point whose model uncertainty is the highest. Thanks to the EKF that estimates the distribution of trained NN parameter, we can estimate the variance around the model's prediction and use it as an uncertainty measure.

\subsection{Kalman Prediction Uncertainty}
After $t$ iterations of training by EKFfm, we have estimated parameter $\boldsymbol{w}$ with expectation $\hat{\boldsymbol{w}}_t$ and covariance $\hat{P}_t=\alpha_{t}^{-2}P_t$ due to $\hat{\boldsymbol{w}}_t=\alpha_{t}^{-1}\hat{\boldsymbol{v}}_{t}$. If we treat the $t$-th trained $\boldsymbol{w}$ as Gaussian $\boldsymbol{w}\sim\mathcal{N}(\hat{\boldsymbol{w}}_t,P_t)$ and approximate $h(\boldsymbol{w},x)$ near $\hat{\boldsymbol{w}}_t$ by the one-order Taylor expansion $h(\boldsymbol{w},x)\approx h(\hat{\boldsymbol{w}}_t,x)+D_{\boldsymbol{w}}h(\hat{\boldsymbol{w}}_t,x)(\boldsymbol{w}-\hat{\boldsymbol{w}}_t)$, then $h(\boldsymbol{w},x)$ approximately obeys the distribution 
\[\mathcal{N}(h(\hat{\boldsymbol{w}}_t,x), \mathrm{D}_{\boldsymbol{w}}h(\hat{\boldsymbol{w}}_t,x)\hat{P}_{t}\mathrm{D}_{\boldsymbol{w}}h(\hat{\boldsymbol{w}}_t,x)^T)=
\mathcal{N}(h(\hat{\boldsymbol{w}}_t,x), \alpha_{t}^{-2}\mathrm{D}_{\boldsymbol{w}}h(\hat{\boldsymbol{w}}_t,x)P_{t}\mathrm{D}_{\boldsymbol{w}}h(\hat{\boldsymbol{w}}_t,x)^T) .\]
By neglecting the scalar factor $\alpha_{t}^{-2}$ at each iteration, we define the \textit{Kalman Prediction Uncertainty (KPU)} as follows to quantity uncertainty of the model's prediction for an input $x$.

\begin{definition}[KPU]
  For an NN model trained by EKFfm, define the KPU of an input $x$ with respect to the $t$-th trained parameters as
  \begin{equation}\label{KPU}
    \mathrm{KPU}_t(x) := \mathrm{D}_{\boldsymbol{w}}h(\hat{\boldsymbol{w}}_t,x)P_{t}\mathrm{D}_{\boldsymbol{w}}h(\hat{\boldsymbol{w}}_t,x)^T .
  \end{equation}
\end{definition}

When the prediction has dimension 1, the KPU becomes $\mathrm{KPU}_t(x) = \nabla_{\boldsymbol{w}}h(\hat{\boldsymbol{w}}_t,x)^{T}P_{t}\nabla_{\boldsymbol{w}}h(\hat{\boldsymbol{w}}_t,x)$ that is a nonnegative number. Note that $\nabla_{\boldsymbol{w}}h(\hat{\boldsymbol{w}}_t,x)=\mathrm{D}_{\boldsymbol{w}}h(\hat{\boldsymbol{w}}_t,x)^{T}$. For the RLEKF optimizer, the computed $P_t$ has a block diagonal form corresponding to the reorganized layer strategy, thus KPU can be calculated as Algorithm \ref{KPU_layer}. Since RLEKF is a block diagonal approximation to EKFfm, the computed KPU in RLEKF is an approximation to the real KPU defined in \eqref{KPU}. 

\begin{algorithm}[!htbp]
  \caption{KPU calculation in RLEKF ($\mathtt{KPU\_cal}$)}\label{KPU_layer}
  \begin{algorithmic}[1]
    \Require $\hat{\boldsymbol{w}}_{t}$, $P_t=\mathrm{diag}(P_{t1},\dots,P_{tL})$, input $x$ \Comment{$P_t$ is block diagonal}
    \State $H = \mathrm{D}_{\boldsymbol{w}}h(\hat{\boldsymbol{w}}_{t}, x)$ \Comment{Backward propagation}
    \State Split $H$ corresponding to $P_t$: $\{H_1, \dots, H_L\}=\mathtt{split}(H)$ 
    \For{$l =1,\dots, L$} \Comment{$L$ is the number of reorganized layers}
      \State $ KPU_l = H_{l}P_{tl}H_{l}^{T} $\Comment{KPU for each reorganized layer}
  \EndFor
  \Ensure $\mathrm{KPU}_t(x)=\sum_{l=1}^{L}KPU_l$
  \end{algorithmic}
\end{algorithm}

\begin{algorithm}[htbp]
  \caption{KPU for atomic force ($\mathtt{KPU_{force}}$)}\label{KPU_cal}
  \begin{algorithmic}[1]
    \Require $\hat{\boldsymbol{w}}_t$, $P_t$, configuration $\mathcal{R}=\{\mathcal{R}_{i}\}_{i=1}^{N_a}$
    \State $ E = \sum_{i} h_E(\hat{\boldsymbol{w}}_{t}, \mathcal{R}_{i})$ \Comment{Predicted total energy}
    \State $\boldsymbol{F}_{i}=-\nabla_{\mathbf{r}_{i}}E$ \Comment{Predicted atomic force}
    \State $\{\boldsymbol{F}_1,\dots,\boldsymbol{F}_n\}=\mathtt{rand\_select}(N_a,\mathtt{frac})$\Comment{Randomly select $n=\lfloor N_a\cdot \mathtt{frac}\rfloor$ atoms}
      \For{$j =1,\dots,n$}
          \For{$k = 1, 2, 3$} 
          \State $KPU_{F_{jk}}=\mathtt{KPU\_cal}(\hat{\boldsymbol{w}}_{t}, P_{t}, \mathcal{R}; F_{jk})$ \Comment{KPU for force component $F_{jk}$}
          \EndFor
      \EndFor
    \Ensure $\mathrm{KPU}_{\mathrm{force}}= \sum_{j=1}^{n} \sum_{k=1}^{3}KPU_{F_{jk}}/3n$ \Comment{Average of KPU for forces}
  \end{algorithmic}
\end{algorithm}

Although KPU for both total energy and atomic forces can be calculated, we find that using KPU for force is generally better in the practical active learning procedure. This is because forces represent local information of atoms which are more sensitive to the changes of configurations, while energy is a global quantity and does not seem to provide sufficient information. Since a physical system involves many atoms, to reduce computational cost and make KPU to be a scalar, a subset of atoms is randomly selected to calculate and sum the KPU for the three components of these atomic forces. Denote by $h_E(\cdot,\cdot)$ the nonlinear mapping from $\mathcal{R}_{i}$ to the atomic energy $E_i$, the calculation of KPU for force is described in Algorithm \ref{KPU_cal}. The output $\mathrm{KPU}_{\mathrm{force}}$ is a nonnegative number that can measure uncertainty for the predicted forces.

\subsection{ALKPU for DeePMD}
During model training, the active learning by KPU (ALKPU) method concurrently selects the next data points with higher KPUs. This gives the following procedure of ALKPU for DeePMD:
\begin{enumerate}
  \item Exploration. With the initial training dataset, a temporary model is trained by RLEKF. Then several steps of MD is run using the energy and forces generated by this NN model. This procedure is used to explore the high-dimensional configuration space to obtain some configurations not too far away from the real AIMD trajectory.
  \item Selection. For the configurations obtained by the exploration procedure, we need to select the most representative ones. The $\mathtt{KPU_{force}}$ of each configuration is computed and then judged whether $\sigma_0\leq \mathtt{KPU_{force}}\leq \sigma_1$ with $\sigma_0=c_0\mathrm{KPU}_{\mathrm{trained}}$ and $\sigma_1=c_1\mathrm{KPU}_{\mathrm{trained}}$, where $\mathrm{KPU}_{\mathrm{trained}}$ is the average KPU of some training data at the last training step. The threshold $\sigma_0$ is used to select configurations with higher KPUs that needed to be labelled. If the KPU is higher than $\sigma_1$, it means that this configuration deviates too far from the real AIMD trajectory and thus it should be discarded.
  \item Labelling. For those selected configurations, labelling can be done by using an electronic structure calculation software to compute the corresponding total energy and atomic forces.
  \item Training. The above labelled data are added to the previous training dataset. With the enlarged training dataset, the model is trained again. This cycle can be repeated multiple times until a well-trained model with desired performance is obtained. 
\end{enumerate}
The workflow of the whole procedure is shown in Figure \ref{fig1}.

\begin{figure}[htb] 
  \centering \includegraphics
  [width=150mm]{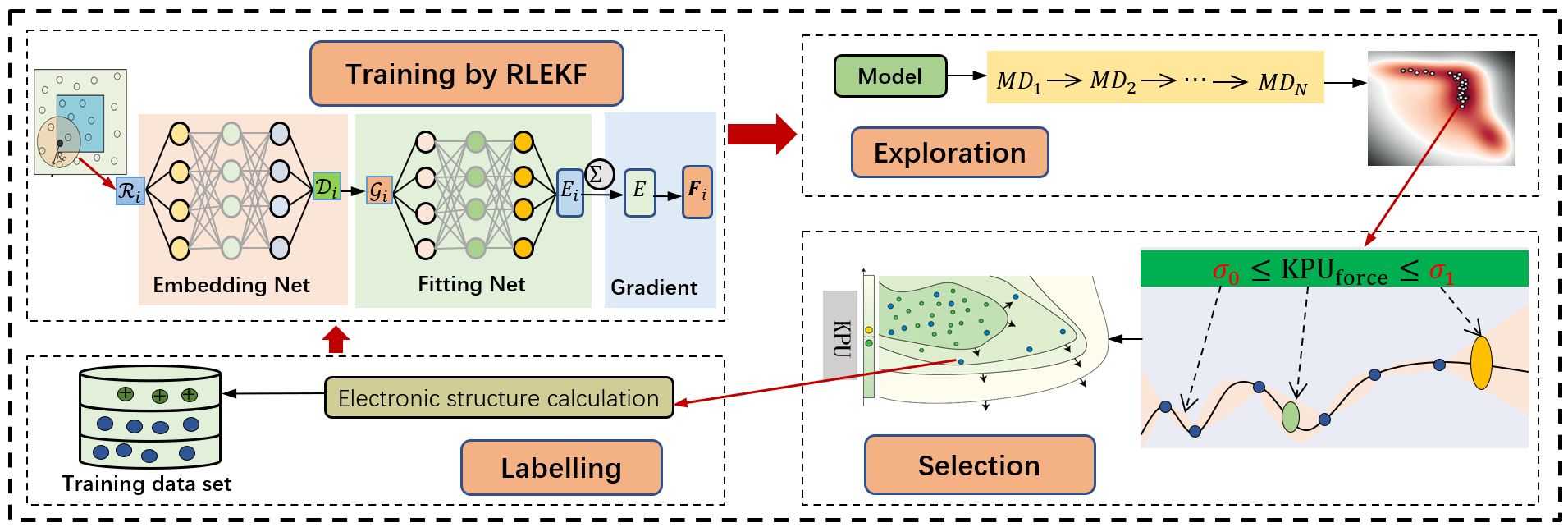}
  \caption{Workflow of ALKPU for DeePMD}
  \label{fig1}
\end{figure}

\section{Theoretical analysis of ALKPU}\label{sec4}
The ALKPU method is essentially a locally fastest reduction procedure of model's uncertainty. This can be revealed with the help of the Fisher information matrix (FIM) and Shannon entropy. The proofs of all the results are in \ref{sec:C}. Suppose the training data $\{(x_{t},y_{t})\}_{t\in \mathbb{N}}$ satisfies the parametrized statistical model 
\begin{equation}\label{data_model}
  y_t = h(\boldsymbol{w},x_t) + \eta_t, \ \ \ \eta_t \sim \mathcal{N}(0,R_t).
\end{equation}
The following result establishes a correspondence between $P_t$ and FIM. Note that $v^{\otimes 2}:=vv^T$ is the cross product for a column vector $v$.

\begin{theorem}\label{ONG}
  For the EKFfm with training data $\{(x_{t},y_{t})\}_{t\in \mathbb{N}}$, denote by $p(y_t|\boldsymbol{w})$ the probability density function and let $J_0 = \gamma_{0}P_{0}^{-1}$ for any arbitrary $\gamma_0 >0$. Then at each step, the following relation holds:
  \begin{equation}\label{Jt}
    J_t = \beta_{t}J_{t-1}+(1-\beta_t)\mathbb{E}_{y_t\sim p(y_t|\hat{\boldsymbol{w}}_{t-1})}\Big [\nabla_{\boldsymbol{w}}\ln p(y_t|\hat{\boldsymbol{w}}_{t-1})^{\otimes 2}\Big ], \ \ \ J_t=\gamma_t P_{t}^{-1},
  \end{equation}
  where
  \begin{equation}\label{beta}
    \beta_t = \frac{\lambda_t}{\lambda_{t}+\alpha_{t}^{-2}\gamma_{t-1}} ,
  \end{equation}
  and $\gamma_t$ satisfies the recursion
  \begin{align}\label{gamma}
    \gamma_t = \frac{\gamma_{t-1}}{\lambda_t + \alpha_{t}^{-2}\gamma_{t-1}}, \ \ \ t\geq 1 .
  \end{align}
\end{theorem}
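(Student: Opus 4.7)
The plan is to derive the recursion by combining three standard pieces: the Woodbury (matrix inversion) form of the EKF covariance update, the Fisher-information identity for a Gaussian observation model, and an algebraic matching of scalar coefficients using the prescribed recursion for $\gamma_t$. The inductive hypothesis $J_{t-1} = \gamma_{t-1} P_{t-1}^{-1}$ is available (the base case $J_0 = \gamma_0 P_0^{-1}$ is given), so the task reduces to expressing $P_t^{-1}$ in terms of $P_{t-1}^{-1}$ and the Fisher information, then identifying the coefficients.

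First I would rewrite the EKFfm covariance update. Starting from $P_t = (I - K_t H_t)\bar{P}_t$ with $\bar{P}_t = \lambda_t^{-1} P_{t-1}$, $K_t = \bar{P}_t H_t^T A_t^{-1}$, and $A_t = H_t \bar{P}_t H_t^T + \alpha_t^2 R_t$, I would apply the Woodbury identity to obtain
\begin{equation*}
P_t^{-1} \;=\; \bar{P}_t^{-1} + H_t^T (\alpha_t^2 R_t)^{-1} H_t \;=\; \lambda_t P_{t-1}^{-1} + \alpha_t^{-2}\, H_t^T R_t^{-1} H_t .
\end{equation*}
This is the key identity: it already has the structure of a convex combination between $P_{t-1}^{-1}$ and a term $H_t^T R_t^{-1} H_t$.

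Next I would identify the second term as the Fisher information. Under the statistical model \eqref{data_model} evaluated at the current estimate, $y_t \mid \hat{\boldsymbol{w}}_{t-1} \sim \mathcal{N}(h(\hat{\boldsymbol{w}}_{t-1}, x_t), R_t)$, so the log-density yields
\begin{equation*}
\nabla_{\boldsymbol{w}} \ln p(y_t\mid \hat{\boldsymbol{w}}_{t-1}) \;=\; H_t^T R_t^{-1}\bigl(y_t - h(\hat{\boldsymbol{w}}_{t-1}, x_t)\bigr),
\end{equation*}
and taking the outer product and expectation against the Gaussian noise gives
\begin{equation*}
\mathbb{E}_{y_t\sim p(y_t\mid\hat{\boldsymbol{w}}_{t-1})}\!\bigl[\nabla_{\boldsymbol{w}}\ln p(y_t\mid\hat{\boldsymbol{w}}_{t-1})^{\otimes 2}\bigr] \;=\; H_t^T R_t^{-1} H_t .
\end{equation*}

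Finally I would multiply the Woodbury identity through by $\gamma_t$ and match coefficients with the claimed form $\beta_t J_{t-1} + (1-\beta_t) H_t^T R_t^{-1} H_t$. Using the inductive hypothesis, the required scalar identities are $\gamma_t \lambda_t = \beta_t \gamma_{t-1}$ and $\gamma_t \alpha_t^{-2} = 1-\beta_t$. Both follow directly from the definitions \eqref{beta}--\eqref{gamma}: dividing numerator and denominator of $\beta_t$ confirms $\beta_t = \lambda_t \gamma_t/\gamma_{t-1}$, while $1-\beta_t = \alpha_t^{-2}\gamma_{t-1}/(\lambda_t + \alpha_t^{-2}\gamma_{t-1}) = \alpha_t^{-2}\gamma_t$. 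This closes the induction and yields \eqref{Jt}.

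The main obstacle, and really the only delicate step, is the Woodbury manipulation together with keeping the two fading-memory factors $\lambda_t$ and $\alpha_t$ straight — in particular remembering that $R_t$ is scaled by $\alpha_t^2$ inside $A_t$ while $P_{t-1}$ is scaled by $\lambda_t^{-1}$. Once this identity is in hand, the Gaussian Fisher information calculation is standard and the remaining coefficient matching is elementary algebra using the prescribed recursion for $\gamma_t$.
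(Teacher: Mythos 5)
Your proposal is correct and follows essentially the same route as the paper's own proof: the Sherman--Morrison--Woodbury inversion of the covariance update to get $P_t^{-1}=\lambda_t P_{t-1}^{-1}+\alpha_t^{-2}H_t^T R_t^{-1}H_t$, the identification of $H_t^T R_t^{-1}H_t$ with the expected outer product of the score under the Gaussian observation model, and the scalar coefficient matching that forces the recursion for $\gamma_t$ and the form of $\beta_t$. No gaps.
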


\begin{proposition}\label{limit}
  For $0<\nu <1$, the asymptotic behavior of $\gamma_t$ and $\beta_t$ satisfies
  \begin{align}
    \lim_{t\rightarrow \infty}\gamma_t = 0, \ \ \ \lim_{t\rightarrow \infty}\beta_t = 1 .  \label{lim}
  \end{align}
\end{proposition}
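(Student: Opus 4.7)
The plan is to transform the nonlinear recursion for $\gamma_t$ into a linear one, solve it in closed form, and read off both limits at the end. Specifically, taking reciprocals in \eqref{gamma} and setting $u_t = 1/\gamma_t$ yields the linear non-homogeneous recursion
\begin{equation*}
  u_t \;=\; \lambda_t\, u_{t-1} + \alpha_t^{-2},
\end{equation*}
which is much easier to unfold. Iterating from $t$ down to $0$ and substituting $\alpha_k^{-2} = \lambda_1\lambda_2\cdots\lambda_k$, each telescoping product $\lambda_t\lambda_{t-1}\cdots\lambda_{k+1}\cdot\alpha_k^{-2}$ collapses to $\alpha_t^{-2}$. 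I therefore expect to obtain the clean closed form
\begin{equation*}
  u_t \;=\; \alpha_t^{-2}\bigl(u_0 + t\bigr), \qquad\text{i.e.}\qquad \gamma_t \;=\; \frac{\alpha_t^{2}}{\gamma_0^{-1} + t}.
\end{equation*}

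Next I would control $\alpha_t^2$. From $\lambda_{t+1} = \nu\lambda_t + 1 - \nu$ one gets $\lambda_t - 1 = \nu^{t-1}(\lambda_1 - 1)$, so for $\lambda_1 \in (0,1]$ and $0<\nu<1$ the sequence satisfies $\lambda_t \in (0,1]$ and $\lambda_t \to 1$ exponentially fast. Consequently $\sum_{k\ge 1} |\ln \lambda_k| < \infty$ because $|\ln\lambda_k| = O(\nu^{k-1})$, which means the infinite product $\prod_{k=1}^{\infty}\lambda_k$ converges to a strictly positive limit. Equivalently, $\alpha_t^2 = \prod_{k=1}^{t}\lambda_k^{-1}$ converges to a finite positive constant and is in particular uniformly bounded in $t$.

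Combining the two facts, $\gamma_t = \alpha_t^2/(\gamma_0^{-1}+t)$ with $\alpha_t^2$ bounded and denominator $\to\infty$ gives $\gamma_t \to 0$, establishing the first limit in \eqref{lim}. For the second, I substitute directly into \eqref{beta}: since $\lambda_t \to 1$, $\alpha_t^{-2}$ is bounded, and $\gamma_{t-1}\to 0$, the ratio
\begin{equation*}
  \beta_t \;=\; \frac{\lambda_t}{\lambda_t + \alpha_t^{-2}\gamma_{t-1}}
\end{equation*}
tends to $1/(1+0)=1$.

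The main obstacle I anticipate is the bound on $\alpha_t^2$. The recursion for $\gamma_t$ alone is not contractive (since $\lambda_t \to 1$), so the convergence to $0$ is genuinely polynomial ($1/t$), and it must come from the linearization combined with convergence of the product $\prod \lambda_k$. If $\lambda_1$ is not assumed in $(0,1]$, the statement could fail (e.g.\ $\lambda_1>1$ would make $\alpha_t^2 \to 0$ exponentially fast too, which still gives the limit but with a different rate); I would either invoke an implicit convention from Algorithm \ref{EKFfm} that $\lambda_1\in(0,1]$ or handle both cases uniformly via the sharper statement that $\sum|\ln\lambda_k|<\infty$ whenever $|\lambda_1-1|<1/\nu$, which is harmless for the qualitative limits claimed.
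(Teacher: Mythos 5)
Your proposal is correct and takes essentially the same route as the paper's own proof: the paper likewise unrolls the reciprocal recursion to get $1/\gamma_t = \alpha_t^{-2}(t + \gamma_0^{-1})$ and shows $\alpha_t^{-2}=\prod_{k\le t}\lambda_k$ converges to a positive constant (via the bound $\ln(1-x)\ge x/(x-1)$ rather than your summability of $|\ln\lambda_k|=O(\nu^{k-1})$, a cosmetic difference). Both arguments then conclude $\gamma_t\to 0$ and $\beta_t\to 1$ identically.
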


For model \eqref{data_model}, the FIM is symmetric semi-definite matrix defined as
\begin{equation}\label{FIM}
  J(\boldsymbol{w}):=\mathbb{E}_{y_t\sim p(y_t|\boldsymbol{w})}\Big[\nabla_{\boldsymbol{w}}\ln p(y_t|\boldsymbol{w})^{\otimes 2}\Big]
  = -\mathbb{E}_{y_t\sim p(y_t|\boldsymbol{w})}\Big[\nabla_{\boldsymbol{w}}^{2}\ln p(y_t|\boldsymbol{w})\Big] ,
\end{equation}  
where $\nabla_{\boldsymbol{w}}^{2}$ denotes the Hessian with respect to $\boldsymbol{w}$ \cite{amari2016}. Theorem \ref{ONG} implies that $J_t$ is a weighted sum of the FIMs with respect to every estimated parameter $\hat{\boldsymbol{w}}_{i}$ during training steps $i=1,\dots,t$. The FIM can be used to measure the amount of information that the training data carry about the unknown parameter $\boldsymbol{w}$, since the inverse of FIM sets a lower bound on the variance of the model's parameter estimate, which is known as the Cram{\'e}r-Rao inequality \cite{Rao1945,Cramer1946}. Relation \eqref{Jt} implies that a smaller $J_{t}^{-1}$ is equivalent to a smaller $P_t$, meaning a lower variance of the $t$-th estimated $\boldsymbol{w}$. Note that for a multidimensional parameter $\boldsymbol{w}$, one has to map the FIM to a scalar for comparison.

Suppose we have the $t$-th trained model, in order to further increase model's extrapolation ability, the next selected data should be less similar as which the $t$-th model can predict, thereby an active learner should select a new training data that leads to a big reduction of the model's uncertainty. Suppose we have a prior distribution of $\boldsymbol{w}$ and the data $(x, y)$ obeys the parametrized model
\begin{equation}\label{model2}
  y=h(\boldsymbol{w},x)+\eta .
\end{equation}
After receiving the data $(x, y)$, the posterior distribution of $\boldsymbol{w}$ becomes $\mathbb{P}(\boldsymbol{w}|y)= \mathbb{P}(y|\boldsymbol{w})\mathbb{P}(\boldsymbol{w})/\mathbb{P}(y)$ by the Bayes' formula. The reduction of uncertainty of $\boldsymbol{w}$ can be measured by the change of Shannon entropy defined as follows.

\begin{definition}[Change of entropy]
  For the data $(x, y)$ obeys the model \eqref{model2}, suppose the probability density functions of $\boldsymbol{w}$ before and after receiving data $(x,y)$ are $p(\boldsymbol{w})$ and $p(\boldsymbol{w}|y)$. Define the change of entropy of $\boldsymbol{w}$ as
  \begin{equation}
    S := \mathbb{E}_{\boldsymbol{w}\sim p(\boldsymbol{w}|y)}[\ln p(\boldsymbol{w}|y)] - 
    \mathbb{E}_{\boldsymbol{w}\sim p(\boldsymbol{w})}[\ln p(\boldsymbol{w})] ,
  \end{equation}
  which is the difference of Shannon entropy between $p(\boldsymbol{w})$ and $p(\boldsymbol{w}|y)$.
\end{definition}

\begin{theorem}\label{kpu_gain}
  Suppose the data $(x, y)$ obeys the model \eqref{model2} with $\eta\sim\mathcal{N}(0,\kappa^{-1}I)$ and $\boldsymbol{w}\sim\mathcal{N}(\hat{\boldsymbol{w}}_t, \hat{P}_t)$ at the $t$-th EKFfm iteration. If we use the approximation
  \begin{equation}\label{approx}
    \frac{1}{2}\nabla^{2}_{\boldsymbol{w}}\|y-h(\boldsymbol{w}, x)\|^2 \approx \nabla_{\boldsymbol{w}}h(\boldsymbol{w},x)\nabla_{\boldsymbol{w}}h(\boldsymbol{w},x)^T
  \end{equation}
  by omitting $\partial^2/\partial  w_i\partial  w_j$ and higher order derivative terms, then the change of entropy of $\boldsymbol{w}$ at the $t$-th iteration is
  \begin{equation}
    S_t = \frac{1}{2}\ln\left(\det\left(I + \kappa\alpha_{t}^{-2}\mathrm{KPU}_{t}(x) \right) \right).
  \end{equation}
\end{theorem}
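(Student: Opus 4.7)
The plan is to use Bayes' formula together with the Gauss--Newton-type approximation \eqref{approx} to conclude that the posterior $p(\boldsymbol{w}|y)$ is effectively Gaussian in $\boldsymbol{w}$, after which the change of entropy collapses to a log-ratio of determinants that reduces to the KPU form via Sylvester's determinant identity.

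First, I would write down $\ln p(\boldsymbol{w}|y) = \ln p(y|\boldsymbol{w}) + \ln p(\boldsymbol{w}) - \ln p(y)$ and substitute the Gaussian prior and likelihood. The prior $\boldsymbol{w}\sim\mathcal{N}(\hat{\boldsymbol{w}}_t,\hat{P}_t)$ contributes $-\tfrac{1}{2}(\boldsymbol{w}-\hat{\boldsymbol{w}}_t)^T\hat{P}_t^{-1}(\boldsymbol{w}-\hat{\boldsymbol{w}}_t)$ up to a normalising constant, and the likelihood with noise $\eta\sim\mathcal{N}(0,\kappa^{-1}I)$ contributes $-\tfrac{\kappa}{2}\|y-h(\boldsymbol{w},x)\|^2$. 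Taking $-\nabla^2_{\boldsymbol{w}}$ of the sum and invoking \eqref{approx} to replace $\tfrac{1}{2}\nabla^2_{\boldsymbol{w}}\|y-h\|^2$ by $\nabla_{\boldsymbol{w}}h(\boldsymbol{w},x)\nabla_{\boldsymbol{w}}h(\boldsymbol{w},x)^T$ yields the approximate posterior precision
\begin{equation*}
\hat{P}_t^{-1} + \kappa\,\nabla_{\boldsymbol{w}}h(\hat{\boldsymbol{w}}_t,x)\,\nabla_{\boldsymbol{w}}h(\hat{\boldsymbol{w}}_t,x)^T ,
\end{equation*}
so that the posterior may be treated as $\mathcal{N}(\mu_{\mathrm{post}},P_{\mathrm{post}})$ with $P_{\mathrm{post}}^{-1}$ equal to the displayed matrix, the mean $\mu_{\mathrm{post}}$ playing no role in what follows.

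Next, I would plug these Gaussians into the definition of $S_t$. Since for any $\mathcal{N}(\mu,\Sigma)$ one has $\mathbb{E}_{\boldsymbol{w}\sim\mathcal{N}(\mu,\Sigma)}[\ln p(\boldsymbol{w})] = -\tfrac{1}{2}\ln\!\bigl((2\pi)^d\det\Sigma\bigr) - \tfrac{d}{2}$, the additive constants and the dimension-dependent terms cancel between the prior and posterior contributions, leaving
\begin{equation*}
S_t = \tfrac{1}{2}\bigl(\ln\det\hat{P}_t - \ln\det P_{\mathrm{post}}\bigr) = \tfrac{1}{2}\ln\det\!\bigl(I + \kappa\,\hat{P}_t\,\nabla_{\boldsymbol{w}}h(\hat{\boldsymbol{w}}_t,x)\,\nabla_{\boldsymbol{w}}h(\hat{\boldsymbol{w}}_t,x)^T\bigr).
\end{equation*}
Applying Sylvester's determinant identity $\det(I+AB)=\det(I+BA)$ and then substituting $\hat{P}_t=\alpha_t^{-2}P_t$ converts the argument into $I+\kappa\alpha_t^{-2}\mathrm{D}_{\boldsymbol{w}}h(\hat{\boldsymbol{w}}_t,x)\,P_t\,\mathrm{D}_{\boldsymbol{w}}h(\hat{\boldsymbol{w}}_t,x)^T$, which is precisely $I+\kappa\alpha_t^{-2}\mathrm{KPU}_t(x)$ by \eqref{KPU}, giving the claimed formula.

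The main obstacle is the step from the Gauss--Newton Hessian approximation \eqref{approx} to actually treating $p(\boldsymbol{w}|y)$ as Gaussian: since $h$ is in general nonlinear in $\boldsymbol{w}$, the exact log-posterior is not quadratic, so this step is really a Laplace-type approximation in which the posterior is identified with its second-order expansion around $\hat{\boldsymbol{w}}_t$, and one must be explicit that the expectations in the definition of $S_t$ are evaluated against this approximating Gaussian. A related sanity check is the sign convention: the quantity $S_t$ as defined is the \emph{reduction} in Shannon entropy $H(\boldsymbol{w})-H(\boldsymbol{w}|y)$ rather than its increase, and the final expression is indeed nonnegative because $\mathrm{KPU}_t(x)\succeq 0$ forces the determinant inside the logarithm to be $\ge 1$. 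Once these interpretive points are settled, the remainder is a short exercise in Gaussian conjugacy and linear algebra.
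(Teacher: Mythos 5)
Your proposal is correct and takes essentially the same route as the paper's proof: Bayes' formula combined with the approximation \eqref{approx} to obtain a Gaussian posterior with precision $\hat{P}_t^{-1}+\kappa\,\nabla_{\boldsymbol{w}}h(\hat{\boldsymbol{w}}_t,x)\nabla_{\boldsymbol{w}}h(\hat{\boldsymbol{w}}_t,x)^T$, the Gaussian entropy formula to reduce $S_t$ to a log-ratio of determinants, and the Sylvester (Weinstein--Aronszajn) identity together with $\hat{P}_t=\alpha_t^{-2}P_t$ to reach the KPU form. Your added remarks on the Laplace-type nature of the Gaussianity step and on the sign convention for $S_t$ are consistent with, and slightly more explicit than, the paper's treatment.
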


A similar result is proposed in \cite{mackay1992information}, where the approximation \eqref{approx} is also used if $(x, y)$ falls into a flat region around $\boldsymbol{w}$. Note that $S_t = \frac{1}{2}\ln\left(1+\kappa\alpha_{t}^{-2}\mathrm{KPU}_t(x)\right)$ when the prediction $y$ has dimension 1. Theorem \ref{kpu_gain} indicates that ALKPU selects the new training data that maximizes the reduction of entropy from the prior to the posterior, thus leads to a locally fastest reduction of model's uncertainty. 

For a well-trained model that can represent the data, there is a relation between the KPU and expected loss that corresponds to mean square error (MSE). For simplicity, the result presented below assumes that $y$ has dimension 1.

\begin{theorem}\label{kpu_loss}
  Suppose data $(x, y)$ obeys the model \eqref{model2} with $\boldsymbol{w}\sim\mathcal{N}(\hat{\boldsymbol{w}}_t,\hat{P}_t)$. Let $\hat{y}_t=h(\hat{\boldsymbol{w}}_t,x)$. If we use the approximation $h(\boldsymbol{w},x)\approx h(\hat{\boldsymbol{w}}_t,x)+\nabla_{\boldsymbol{w}}h(\hat{\boldsymbol{w}}_t,x)^T(\boldsymbol{w}-\hat{\boldsymbol{w}}_t)$, then
  \begin{equation}
    \mathbb{E}_{\boldsymbol{w}}[|y-\hat{y}_t|^2] \geq \alpha_{t}^{-2}\mathrm{KPU}_{t}(x) .
  \end{equation}
\end{theorem}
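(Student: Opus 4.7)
The plan is to use the linear approximation of $h$ given in the theorem to reduce $|y-\hat y_t|^2$ to a sum of a quadratic form in $\boldsymbol{w}-\hat{\boldsymbol{w}}_t$ and a noise term, then compute the expectation termwise using the Gaussian assumption on $\boldsymbol{w}$ and the zero-mean assumption on $\eta$.

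First I would substitute the model $y=h(\boldsymbol{w},x)+\eta$ and the assumed linearization $h(\boldsymbol{w},x)\approx h(\hat{\boldsymbol{w}}_t,x)+\nabla_{\boldsymbol{w}}h(\hat{\boldsymbol{w}}_t,x)^T(\boldsymbol{w}-\hat{\boldsymbol{w}}_t)$ into the residual, obtaining
\begin{equation*}
y-\hat y_t \;\approx\; \nabla_{\boldsymbol{w}}h(\hat{\boldsymbol{w}}_t,x)^T(\boldsymbol{w}-\hat{\boldsymbol{w}}_t)+\eta .
\end{equation*}
Squaring gives three pieces: the quadratic form $\big(\nabla_{\boldsymbol{w}}h^T(\boldsymbol{w}-\hat{\boldsymbol{w}}_t)\big)^2$, the noise square $\eta^2$, and a cross term $2\eta\,\nabla_{\boldsymbol{w}}h^T(\boldsymbol{w}-\hat{\boldsymbol{w}}_t)$.

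Second, I would take expectation over both $\boldsymbol{w}\sim\mathcal{N}(\hat{\boldsymbol{w}}_t,\hat P_t)$ and the independent noise $\eta$. The cross term vanishes because $\mathbb{E}[\eta]=0$ and $\eta$ is independent of $\boldsymbol{w}$. For the quadratic piece, the identity $\mathbb{E}[(a^T(\boldsymbol{w}-\hat{\boldsymbol{w}}_t))^2]=a^T\hat P_t a$ for $a=\nabla_{\boldsymbol{w}}h(\hat{\boldsymbol{w}}_t,x)$, together with $\hat P_t=\alpha_t^{-2}P_t$ and the definition \eqref{KPU} of $\mathrm{KPU}_t$, yields exactly $\alpha_t^{-2}\mathrm{KPU}_t(x)$. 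The noise variance contributes a nonnegative quantity $\mathbb{E}[\eta^2]\geq 0$, which gives the inequality rather than an equality.

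Putting these pieces together produces
\begin{equation*}
\mathbb{E}_{\boldsymbol{w}}[|y-\hat y_t|^2] \;\approx\; \alpha_t^{-2}\mathrm{KPU}_t(x)+\mathbb{E}[\eta^2] \;\geq\; \alpha_t^{-2}\mathrm{KPU}_t(x),
\end{equation*}
which is the desired bound. There is no substantial obstacle here; the only subtlety is justifying that the cross term drops out, which I would handle by explicitly stating the standing independence assumption between the observation noise $\eta$ and the parameter distribution in \eqref{fade}, inherited from the EKFfm modelling hypothesis. The linearization error is treated as in the KPU definition itself and is absorbed into the approximation symbol, consistent with the first-order expansion used throughout Section 3.
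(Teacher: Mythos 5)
Your proposal is correct and follows essentially the same route as the paper: both decompose the residual into the noise $\eta=y-h(\boldsymbol{w},x)$ plus the linearized deviation $\nabla_{\boldsymbol{w}}h(\hat{\boldsymbol{w}}_t,x)^T(\boldsymbol{w}-\hat{\boldsymbol{w}}_t)$, kill the cross term, identify the quadratic form with $\alpha_t^{-2}\mathrm{KPU}_t(x)$ via $\hat P_t=\alpha_t^{-2}P_t$, and drop the nonnegative noise contribution to get the inequality. The only cosmetic difference is that the paper eliminates the cross term using $\mathbb{E}_{\boldsymbol{w}}[\boldsymbol{w}-\hat{\boldsymbol{w}}_t]=0$ with $\eta$ held fixed, whereas you invoke $\mathbb{E}[\eta]=0$ and independence; both are valid.
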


The linear approximation of $h(\boldsymbol{w},x)$ around $\hat{\boldsymbol{w}}_t$ is reasonable given that the model's prediction is fairly good. Theorem \ref{kpu_loss} implies that for the data well covered by the trained model, KPU can be used to predict the MSE loss.

\section{Experimental results}\label{sec5}
In the experiments, the network configuration is $[1, 25, 25, 25]$ for the embedding net and  $[400, 50, 50, 50,1]$ for the fitting net, and the setting of $r_c$, $N_m$, $M_1$, $M_2$ and $s(\cdot)$ is consistent with DeePMD-kit \cite{wang2018deepmd}. For RLEKF, we set $\lambda_{1}=0.98$, $\nu=0.9987$ and $P_{0}=I$, while the initialization $\hat{\boldsymbol{w}}_0$ is consistent with DeePMD-kit and the reorganizing-layer strategy is consistent with \cite{hu2022rlekf}. For the KPU calculation procedure $\mathtt{KPU}_{\mathrm{force}}$, we set $\mathtt{frac}=0.5$. The two threshold $c_0$ and $c_1$ in ALKPU are set as $c_0=1.25$ and $c_1=2.0$. The AIMD and electronic structure calculations based on density functional theory (DFT) are implemented using the PWmat software \cite{pwmat1,pwmat2}.

The experimental environment is a high-performance cluster environment, including six computing nodes and one proxy node; see Table \ref{tabD1} for a detailed description. Each computing node has four high-performance graphics cards, including four nodes partitioned as ``A100'', each with four high-performance A100 graphics cards where the memory of a single card is 40GB. Two other nodes are partitioned as ``3090'', each with four GeForce RTX 3090 graphics cards, where the memory of a single card is 24GB.
In the experiments of ALKPU and DP-GEN, the main process runs on the proxy node, responsible for task distribution, result handling, and driving the iterative active learning process. Model training and DFT calculation were run on the computing nodes.

\begin{table}[H]
  \caption{Description of the high performance cluster environment}\label{tabD1}
    \centering
    \resizebox{\linewidth}{!}{
    \begin{tabular}{|c|c|c|c|c|}
      \hline
    \multirow{2}{*}{\makecell[l]{Computing \\ environment}} & \multirow{2}{*}{Proxy node} & \multicolumn{2}{c|}{Partition ``A100''} & Partition ``3090'' \\
          \cline{3-5} 
     & & Node 1, 2, 3 & Node 4 & Node 5, 6 \\
            \hline
    CPU & \makecell[l]{ Intel(R) Xeon(R) \\ Silver 4210 \\ CPU @ 2.20GHz} & 
          \makecell[l]{ Intel(R) Xeon(R) \\ Platinum 8375C \\ CPU @ 2.90GHz} &   
          \makecell[l]{ Intel(R) Xeon(R) \\ Gold 6248R \\ CPU @ 3.00GHz}  &
          \makecell[l]{ Intel(R) Xeon(R) \\ Gold 6326 \\ CPU @ 2.90GHz}  \\ \hline
    Number of CPU  & 2  & 2          & 2        & 2  \\ \hline
    CPU cores    & 10  & 32           & 24       & 16                      \\ \hline
    Logistical CPU  & 40  & 64        & 48       & 32                       \\ \hline
    Graphics card  & no & \multicolumn{2}{c|}{GRID A100 PCIe 40GB * 4} & GeForce RTX 3090 24G * 4 \\ 
    \hline
     \end{tabular}}
\end{table}

\begin{figure}[htbp]
  \centering
  \subcaptionbox{ \label{D2a} Cu}{\includegraphics[width=0.25\textwidth]{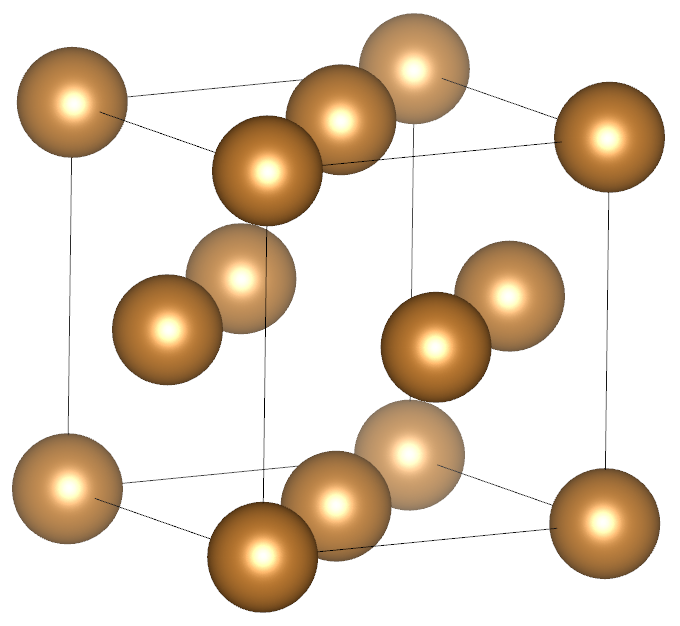}}\hspace{10mm}
  \subcaptionbox{ \label{D2b} Si}{\includegraphics[width=0.28\textwidth]{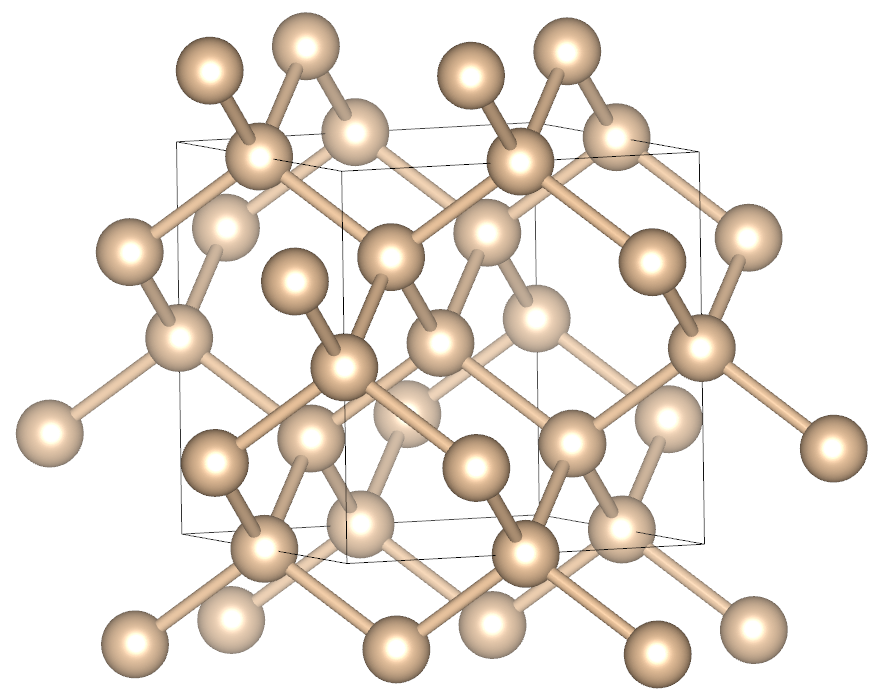}}%
  \vfill
  \subcaptionbox{ \label{D2c} Al}{\includegraphics[width=0.25\textwidth]{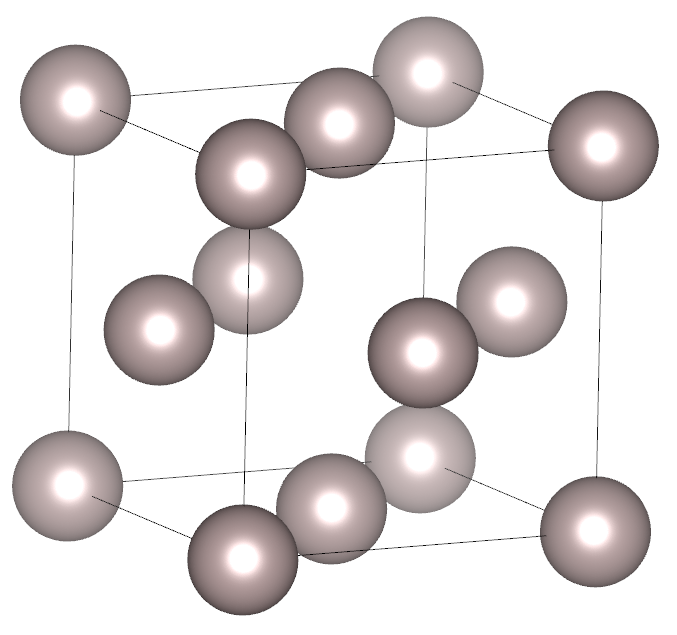}}\hspace{10mm}
  \subcaptionbox{ \label{D2d} Ni}{\includegraphics[width=0.25\textwidth]{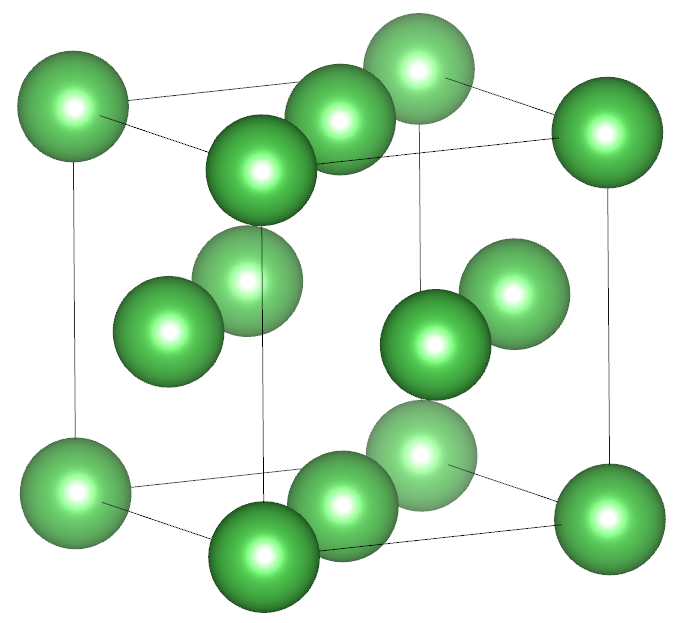}}%
  \caption{Crystal structure of the four test systems: (a). Cu-108, FCC; (b). Si-64, diamond; (c). Al-108, FCC; (d). Ni-108, FCC. }
  \label{figD2}
\end{figure}

The crystal structure of the four systems used in the tests are shown in Figure \ref{figD2}. The Cu Si, Al, and Ni system includes 108, 64, 108 and 108 atoms, respectively, and the initial training datasets are generated by running 500 AIMD steps/2fs at 300K, 500K, 300K and 500K, respectively.

\begin{table}[htbp]
  \caption{Comparison between ALKPU and DP-GEN for selecting configurations of Cu-108 system}\label{sampling}
  \centering
  \resizebox{\linewidth}{!}{
  \begin{tabular}{cccccccccc}
    \toprule
    \multicolumn{4}{c}{Exploration by MD} &  \multicolumn{3}{c}{ALKPU} & \multicolumn{3}{c}{DP-GEN}    \\
    \cmidrule(lr){1-4}  \cmidrule(lr){5-7}  \cmidrule(lr){8-10}
    Round &  Temperature  & Step  & Maximum   & Reliable  & Selected  & Discarded & Reliable  & Selected  & Discarded  \\
    \midrule
    1  & 500K & 30000  & 100 &  1710  & 1289  & 1  & 2606 & 393  & 1  \\
    2  & 500K & 30000  & 100 & 2422  & 577  & 1  & 2976 & 25  & 0  \\
    3  & 500K & 30000  & 100 & 2997  & 2   & 1  & 2989 & 11  & 0  \\
    4  & 800K & 30000  & 100 & 2547  & 452  & 1  & 1223 & 1777  & 0  \\
    5  & 800K & 30000  & 100 & 2826  & 171   & 3  & 2779 & 220  & 1  \\
    6  & 800K & 30000  & 100 & 2995  & 4  & 1  & 2885 & 115  & 0  \\
    7  & 1200K & 30000  & 100 & 2686  & 313   & 1  & 1229 & 1760  & 11  \\
    8  & 1200K & 30000 & 100 & 2892  & 105  & 3  & 2386 & 612  & 2  \\
    9  & 1200K & 30000  & 100 & 2995 & 4  & 1  & 2826 & 173  & 1  \\
    10  & 1400K & 30000  & 100 & 2998  & 2  & 0  & 2292 & 708  & 0  \\
    11  & 1400K & 30000  & 100 & 2956  & 42  & 2  & 2770 & 228  & 2  \\
    12  & 1400K & 30000  & 100 & 2990  & 9  & 1  & 2905 & 95  & 0  \\
    \bottomrule
  \end{tabular}}
\end{table}

Now we show the test results for ALKPU and the comparison with DP-GEN on the Cu system. During the exploration step, MD is run with a 2fs time interval using energies and forces predicted by the temporary trained model, and $10\%$ configurations are uniformly chosen from this MD trajectory. Note that the melting point of Cu is 1357.77K, thus a large amount of extra configurations should be select as the temporary increases from 300K to 1400K. Table \ref{sampling} compares the performance of ALKPU and DP-GEN for selecting representative configurations. For the selection step, the parameter \texttt{Maximum} is used to randomly select \texttt{Maximum} points from the selected ones if the number of selected points exceeds \texttt{Maximum}. DP-GEN also has two thresholds $(\sigma_0, \sigma_1)$ similar to $(c_0, c_1)$ to judge excessively low or high empirical variances, and here we set them to $(0.02, 0.2)$ as the default setting ($0.05, 0.15$) can not work well. For both ALKPU and DP-GEN, the number of selected configurations is consistent with the increase of temperature. For example, at the 1-st, 4-th and 7-th round, a large amount of new configuration are selected and labelled, this is because the temporary trained model's prediction accuracy is poor since it does not cover the configuration space at higher temperatures. After completing the 11-th round of active learning, the model gradually covers the configuration space of interest, leading to a decrease in the number of selected configurations. 

\begin{figure}[htbp]
  \centering
  \subcaptionbox{ \label{2a}}{\includegraphics[width=0.35\textwidth]{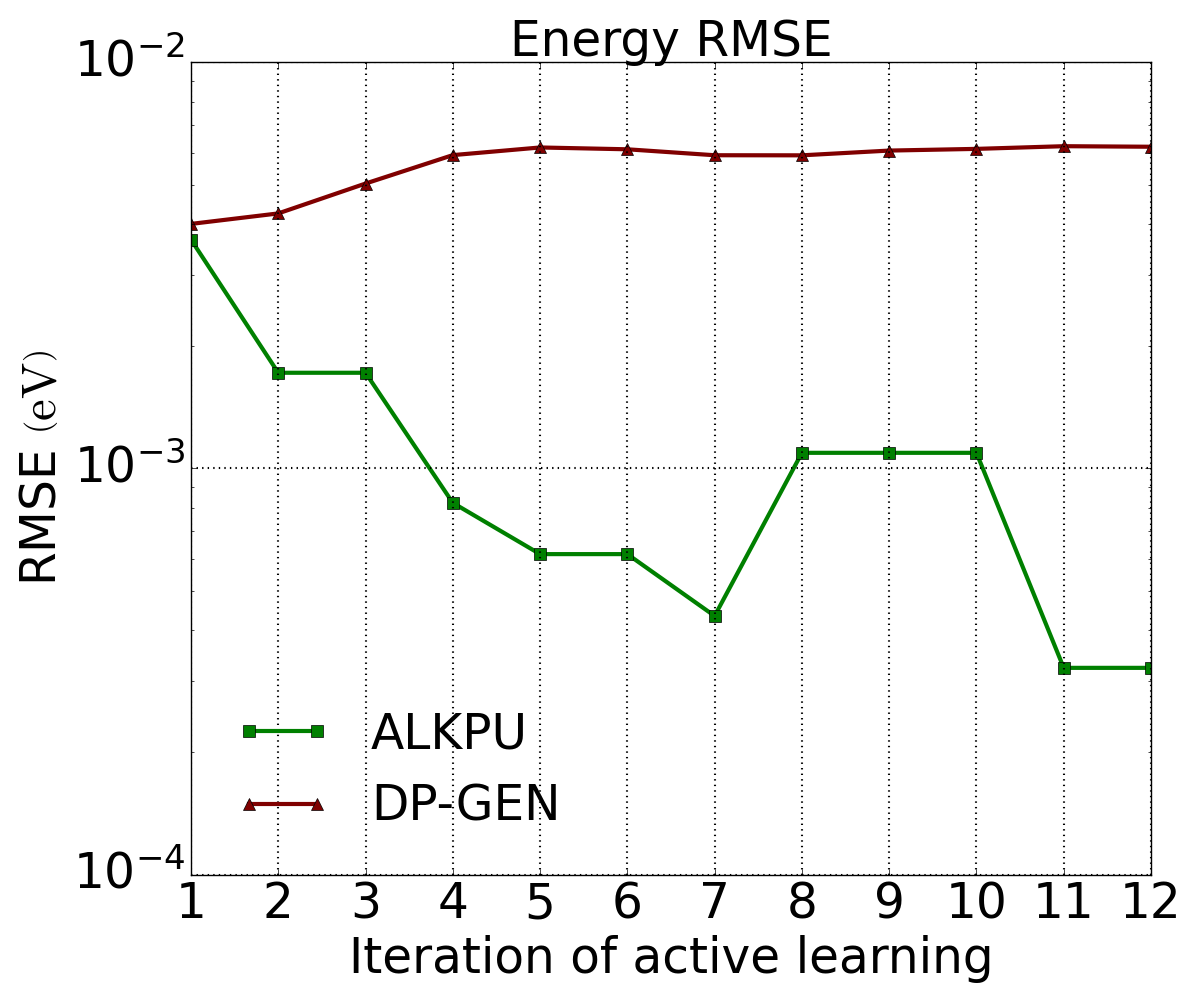}} \hspace{5mm}
  \subcaptionbox{ \label{2b}}{\includegraphics[width=0.35\textwidth]{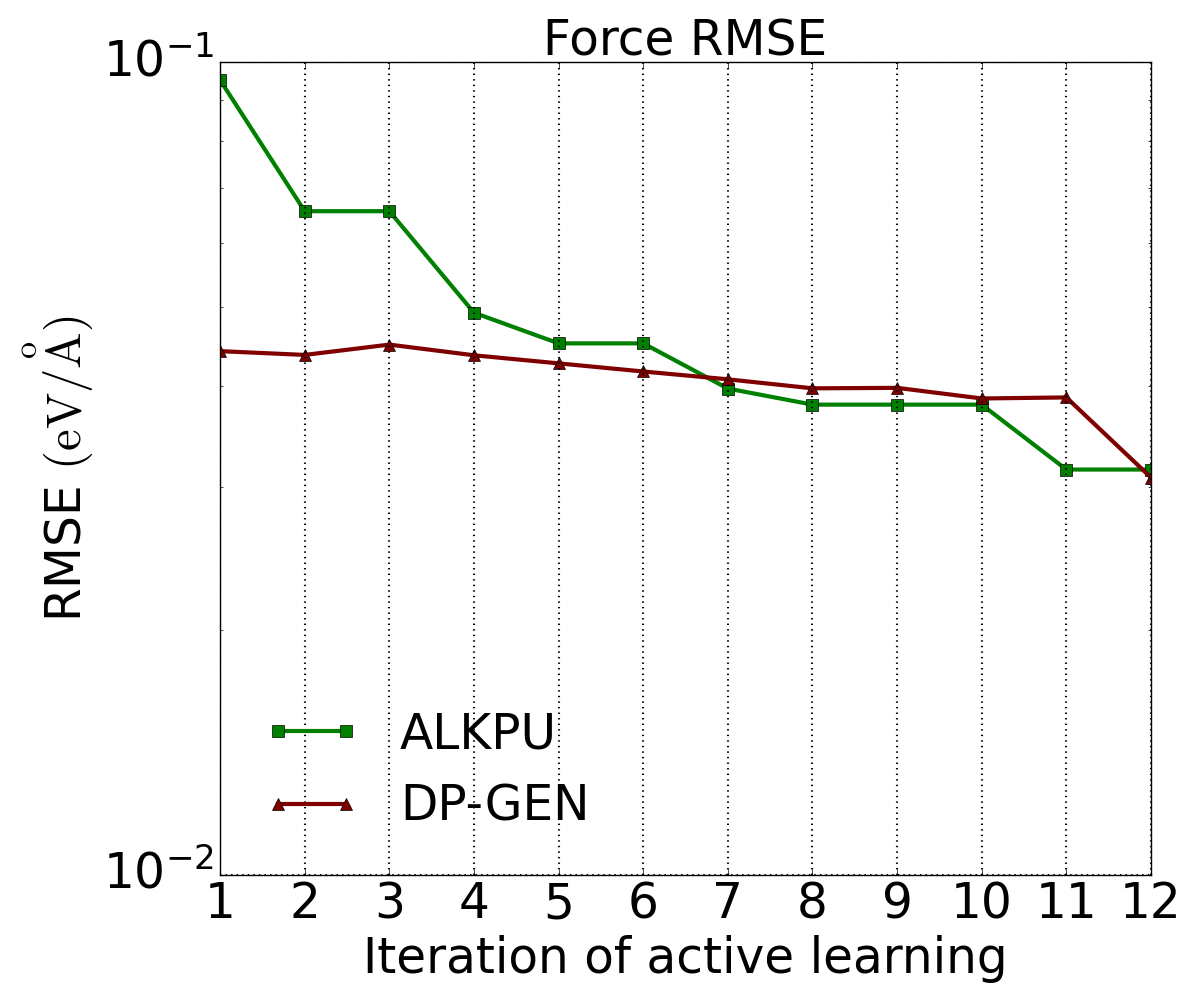}}%
  \vfill
  \subcaptionbox{ \label{2c}}{\includegraphics[width=0.35\textwidth]{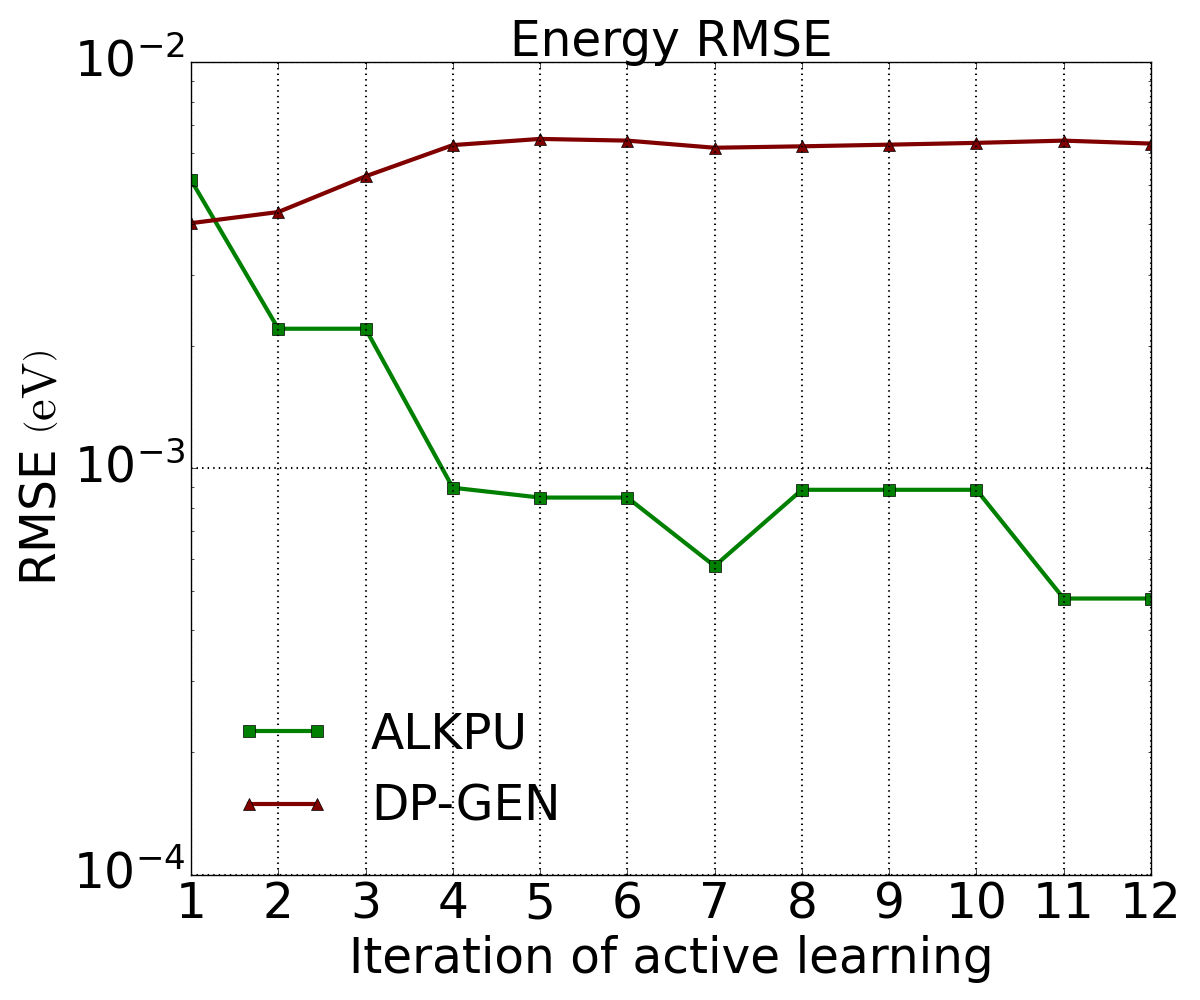}} \hspace{5mm}
  \subcaptionbox{ \label{2d}}{\includegraphics[width=0.35\textwidth]{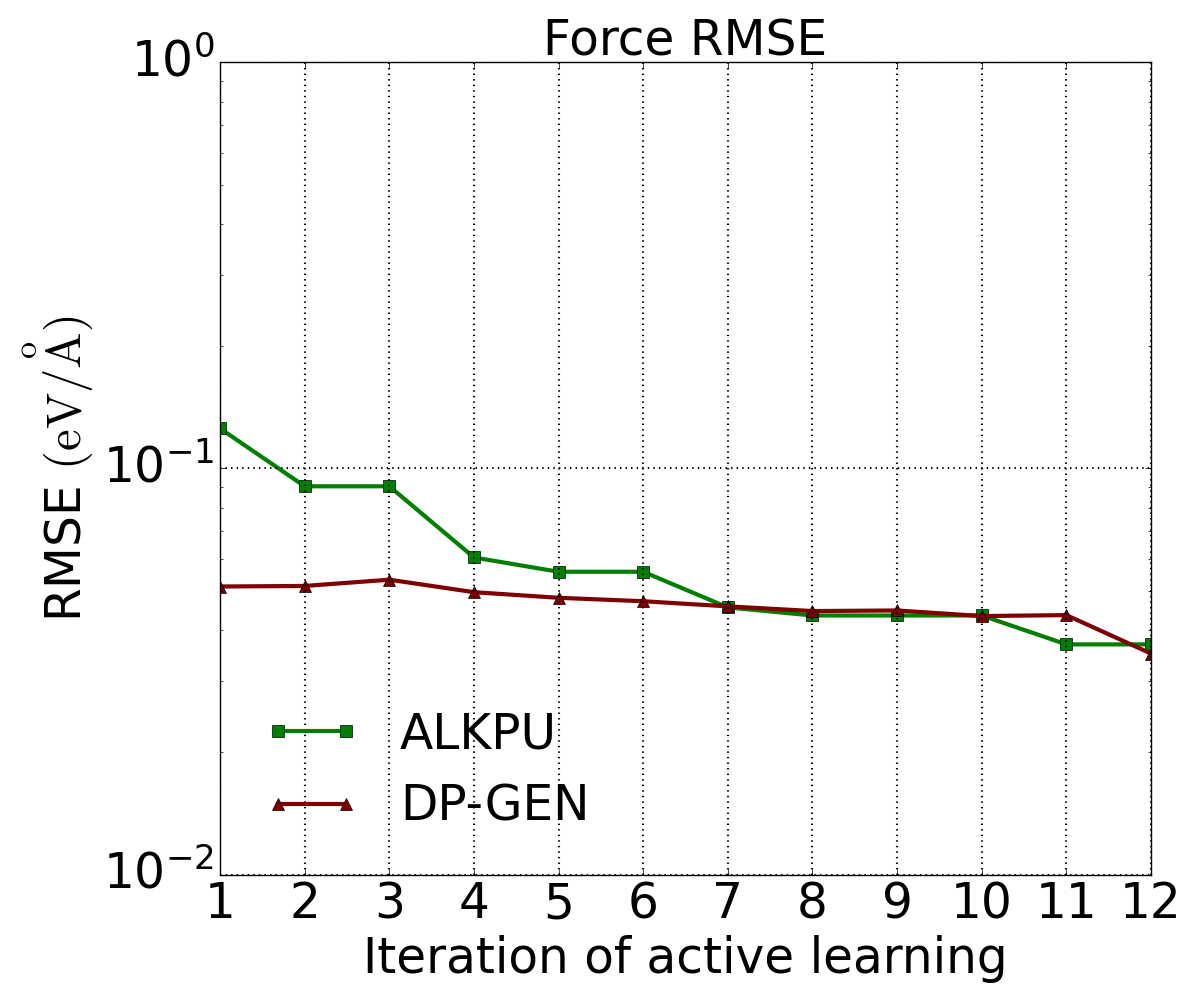}}%
  \caption{Convergence of predicted energy and force on two validation sets as the active learning proceeds, Cu-108 system: (a). Energy error, 1200K ; (b). Force error, 1200K ; (c). Energy error, 1400K; (d). Force error, 1400K. }
  \label{fig2}
\end{figure}

\begin{figure}[htbp]
  \centering
  \subcaptionbox{\label{3a}}{\includegraphics[width=0.32\textwidth]{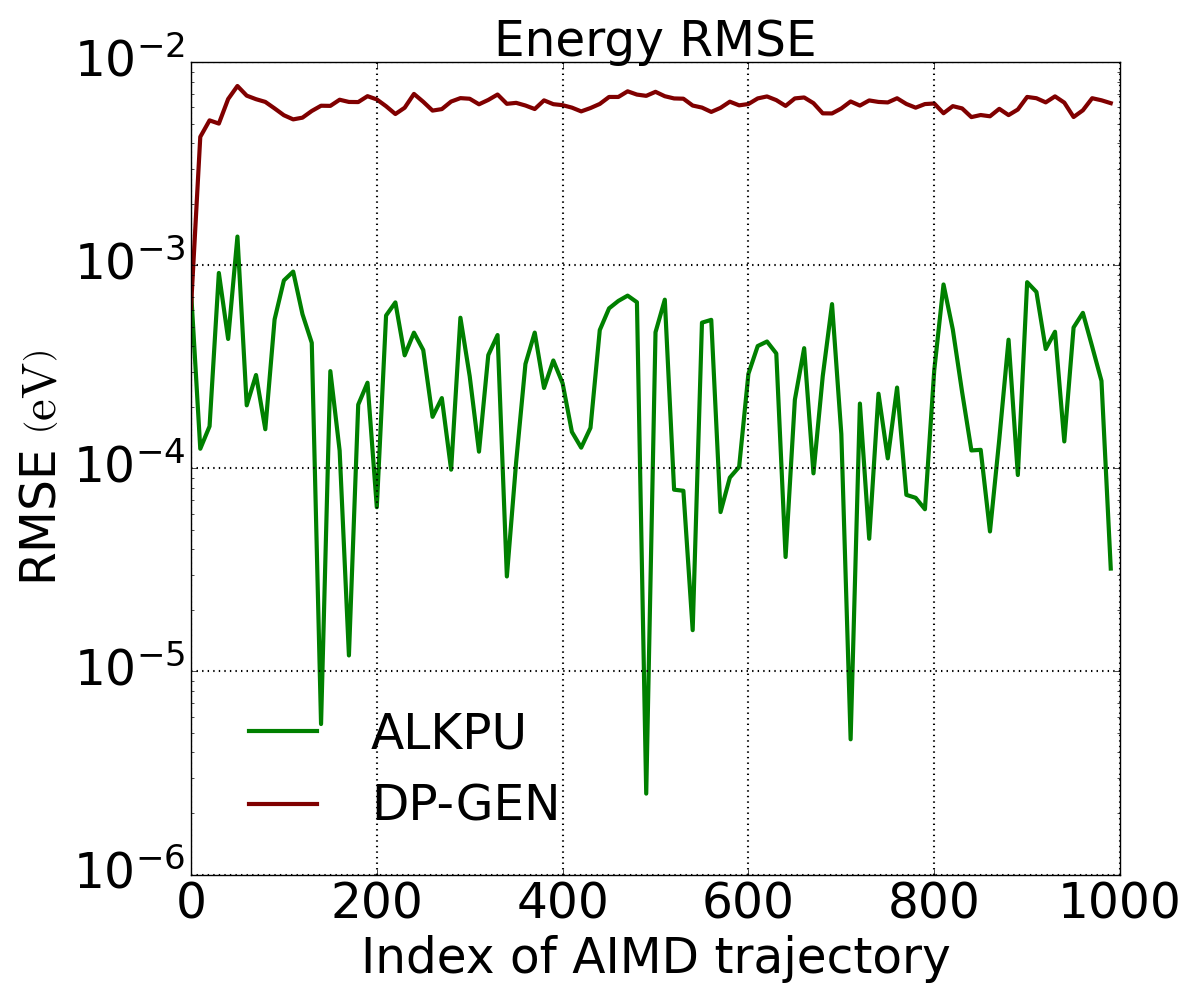}} \hspace{5mm}
  \subcaptionbox{\label{3b}}{\includegraphics[width=0.32\textwidth]{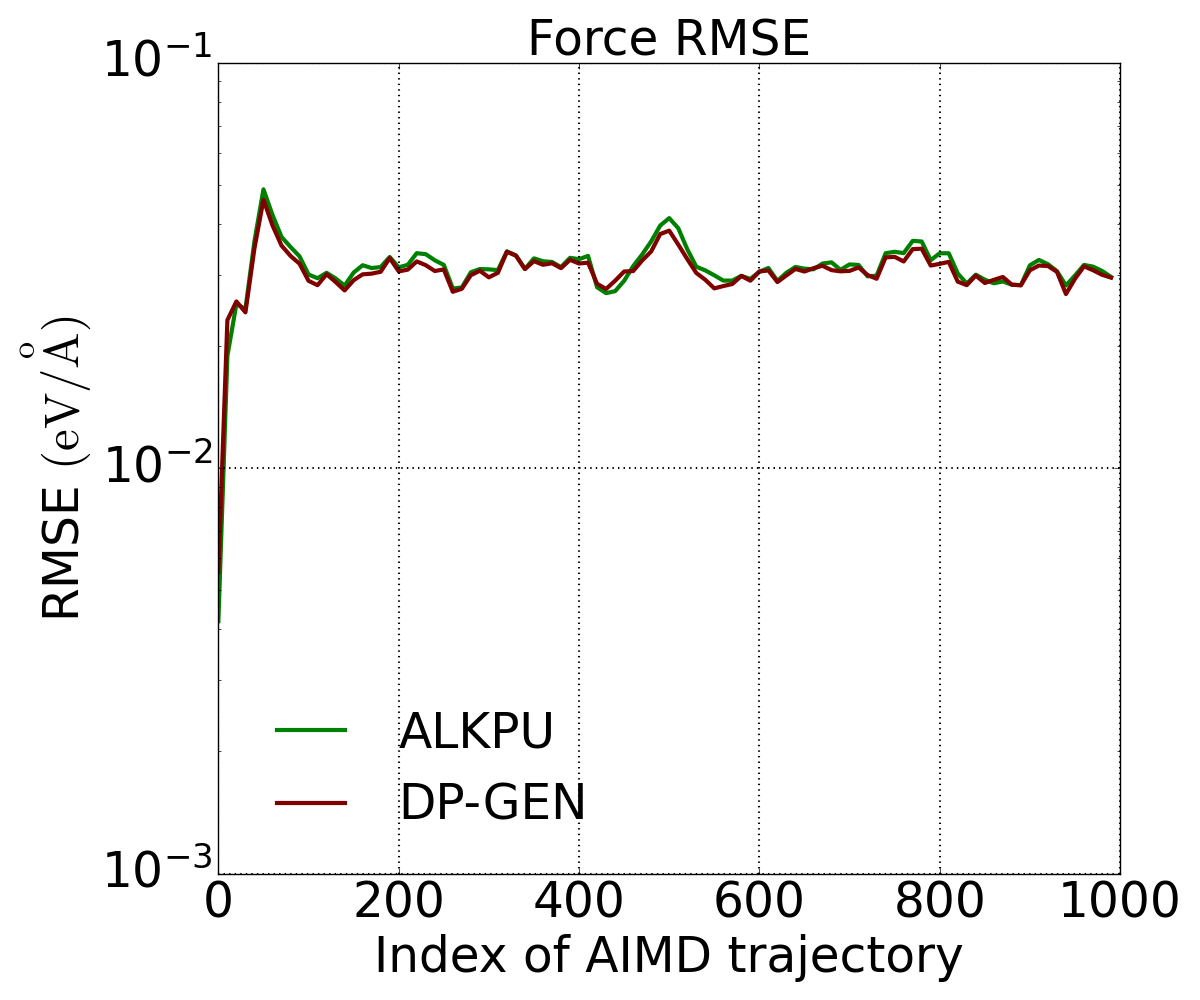}}%
  \vfill
  \subcaptionbox{\label{3c}}{\includegraphics[width=0.32\textwidth]{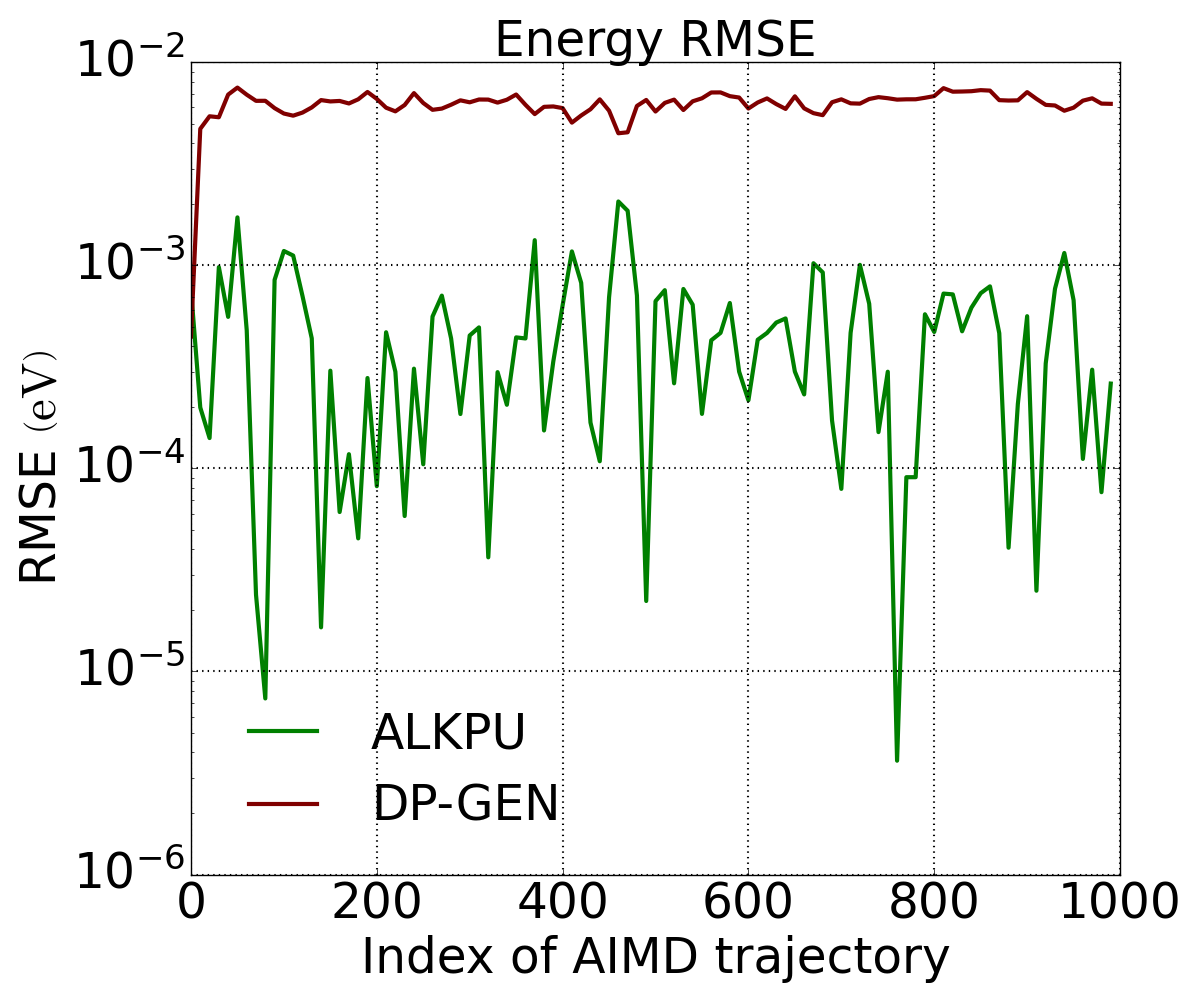}} \hspace{5mm}
  \subcaptionbox{\label{3d}}{\includegraphics[width=0.32\textwidth]{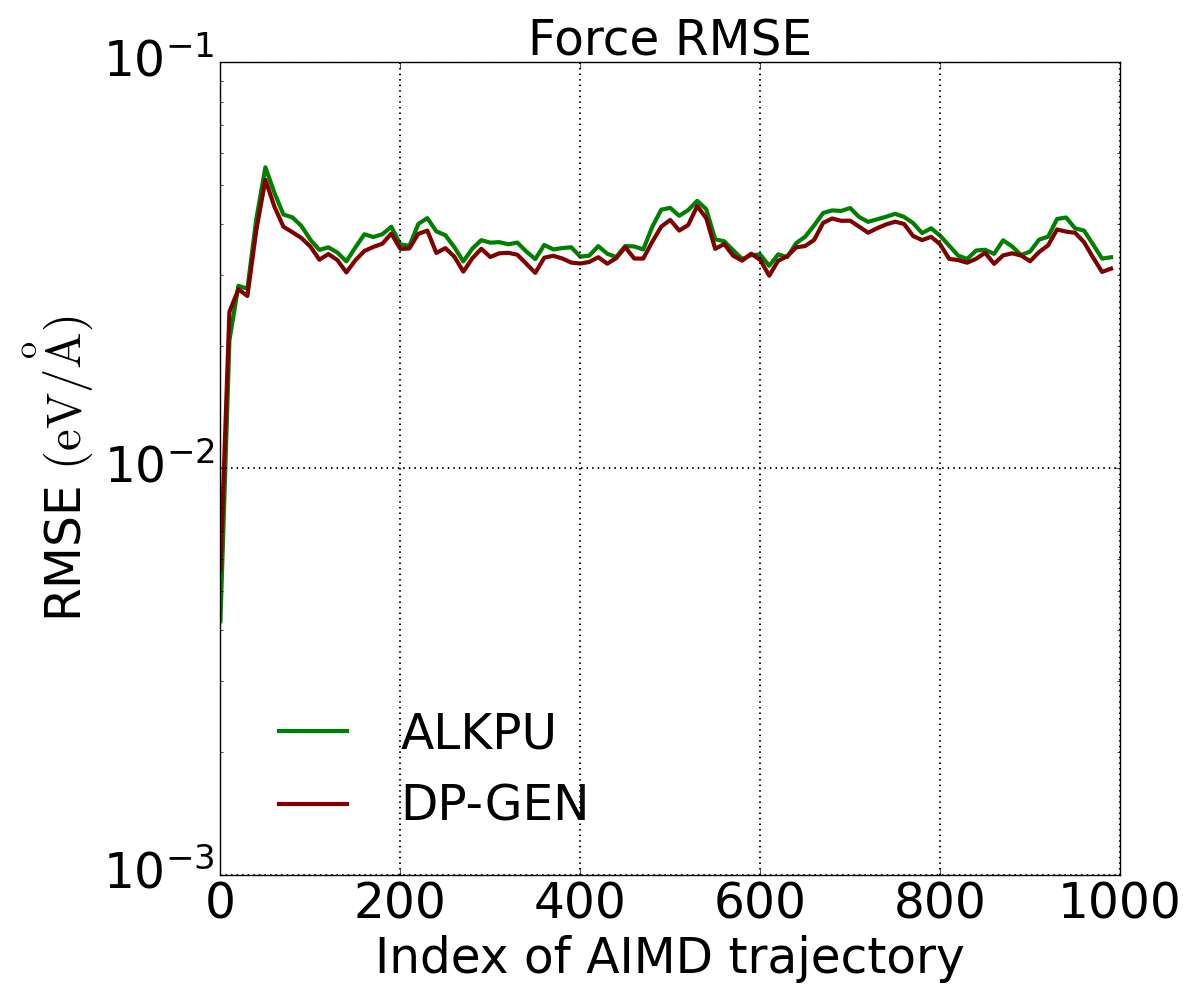}}%
  \caption{RMSE of energy and force at each configuration in the two validation sets after the active learning has finished, Cu-108 system: (a). Energy RMSE, 1200K ; (b). Force RMSE, 1200K ; (c). Energy RMSE, 1400K; (d). Force RMSE, 1400K.}
  \label{fig3}
\end{figure}

For the above two active learning procedures, Figure \ref{fig2} displays the convergence behaviors of ALKPU and DP-GEN on two validation sets as the number of active learning rounds increases. The two validation sets are generated by DFT calculations of 1000 configurations in the AIMD trajectories at 1200K and 1400K, respectively, and the convergence is measured by the average of 1000 root mean square errors (RMSE) for the 1000 configurations. As the active learning proceeds, the error gradually deceases, this is because new configurations are added to enlarge the training dataset and the configuration spaces at 1000K and 1400K are gradually covered by the trained model. For both ALKPU and DP-GEN, the errors of energy and force eventually decrease at the level of \textit{ab initio} accuracy. Note that for DP-GEN, the energy error deceases to the \textit{ab initio} accuracy at the first round, and the decreasing trend is not very obvious afterwards. For a more detailed comparison, Figure \ref{fig3} displays the RMSE of energy and force predicted by the final trained model for all the 1000 configurations of AIMD trajectories in the 1200K and 1400K validation sets. While both methods eventually achieve \textit{ab initio} accuracy, the ALKPU trained model predicts energy with slightly smaller RMSE than DP-GEN. These comparisons demonstrate that ALKPU can effectively cover the configurations space of interest during model training. Considering that ALKPU only trains one model while DP-GEN trains four models, ALKPU has advantages in terms of reducing training time and computational resource overheads.

\begin{table}[htbp]
  \caption{Time consuming for one step active learning, Cu-108 system.}\label{tab:time_count_iter}
  \centering
  \footnotesize
  \setlength{\tabcolsep}{15pt}
  \renewcommand{\arraystretch}{1.5}
  \begin{tabular}{lcccccccc}
      \toprule
        Computation & Time consuming \\ \hline
        DP-GEN training  & 0.66h (1 batchsize, 400 epochs) \\ \midrule
        ALKPU training  & 0.34h (128 batchsize, 100 epochs) \\ \midrule
        KPU computation & 0.28h (3000 configurations, 7 concurrent threads) \\ \midrule
        MD & 0.4h (30000 steps/2fs, 32 CPUs) \\ \midrule
        Labeling & 9.2h (single A100 GPU with 4 jobs, k-point (1,1,1)) \\ 
        \bottomrule
  \end{tabular}
\end{table}

Table \ref{tab:time_count_iter} summarizes the time cost statistics for one step DP-GEN and ALKPU active learning methods for the Cu-108 system. For comparison, the training was run on a single GPU. In ALKPU, the model was trained by RLEKF with a batchsize of 128, while in DP-GEN the model was trained by Adam and the batchsize is one by default of DeePMD-kit. Compared to DP-GEN, ALKPU takes more time to calculate KPU, this is due to the computation of back-propagation to get the gradient of each atomic force. One can randomly select much less atoms to calculating $\mathrm{KPU_{force}}$ to reduce the computational cost of back-propagation. Additionally, since KPU of different configurations can be computed simultaneously, this task can be divided into multiple tasks, further reducing the time required for this part. As ALKPU only needs to train one model while DP-GEN needs to train four models, the overall computational resource overhead of ALKPU is still less than that of DP-GEN.

\begin{figure}[htbp]
  \centering
  \subcaptionbox{ \label{D1a} 800K}{\includegraphics[width=0.3\textwidth]{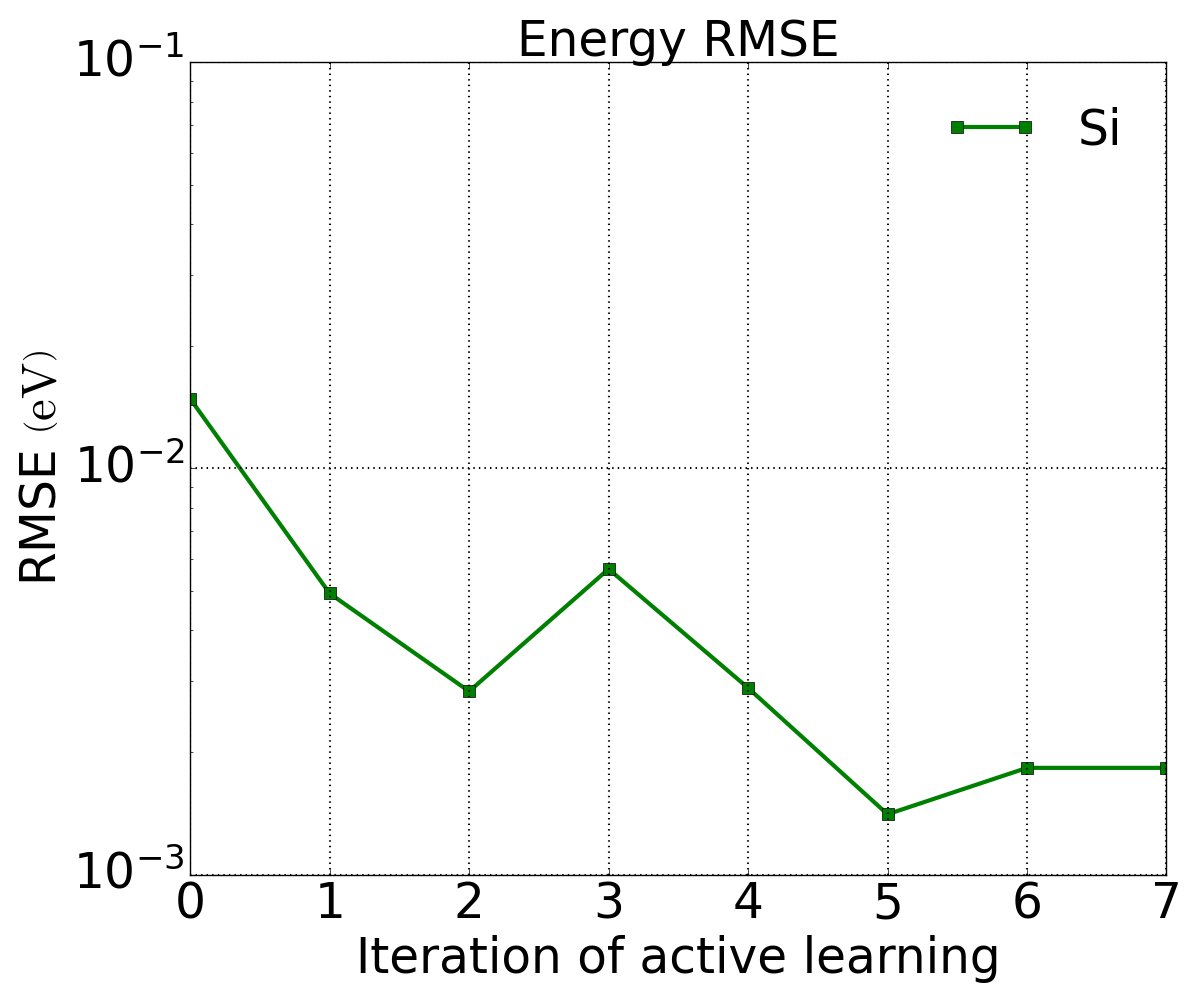}}
  \hspace{5mm}
  \subcaptionbox{ \label{D1b} 800K}{\includegraphics[width=0.3\textwidth]{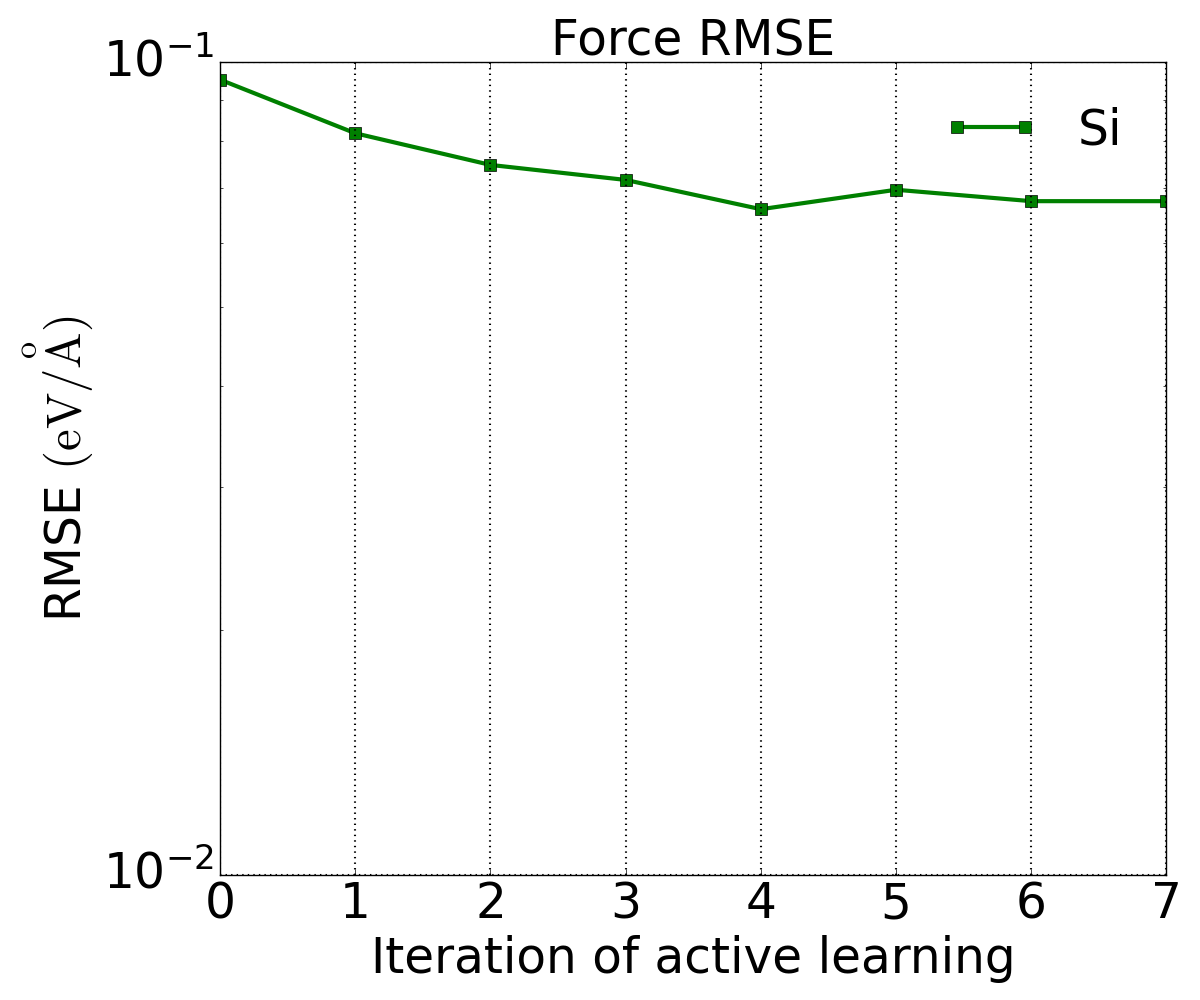}}%
  \vfill
  \subcaptionbox{ \label{D1c} 1100K}{\includegraphics[width=0.3\textwidth]{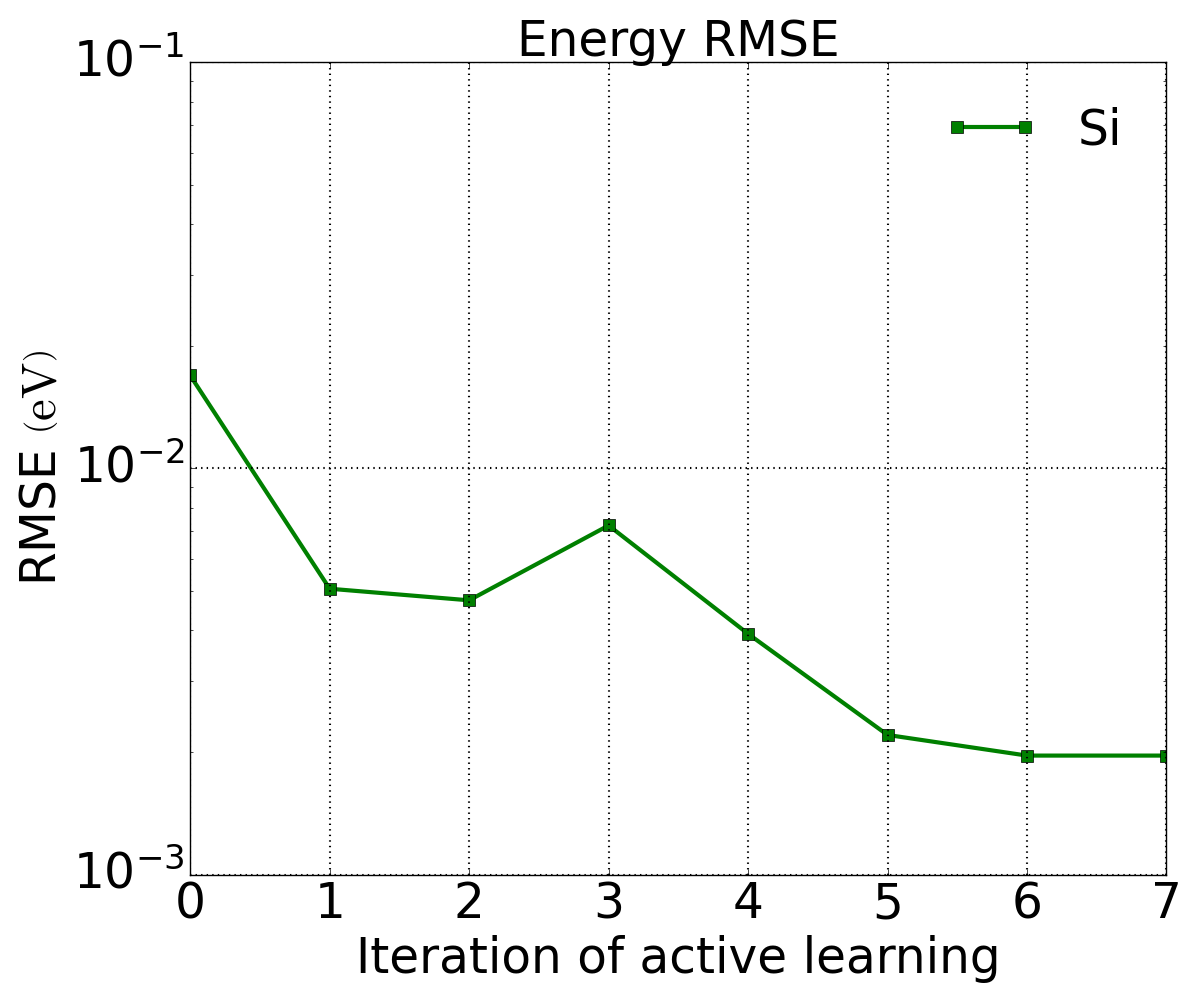}}
  \hspace{5mm}
  \subcaptionbox{ \label{D1d} 1100K}{\includegraphics[width=0.3\textwidth]{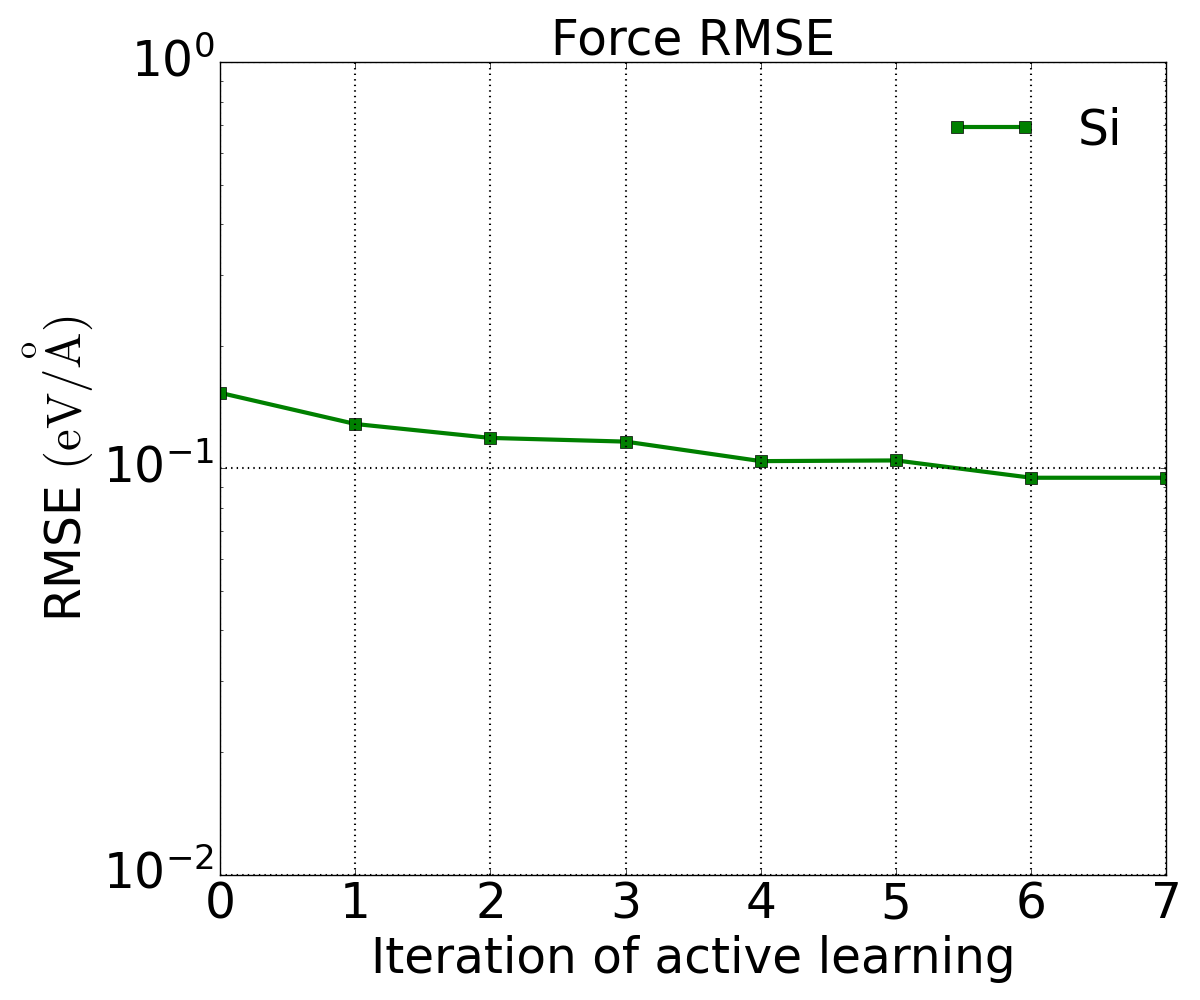}}%
  \caption{Convergence of predicted energy and force on two validation sets of as the active learning proceeds, Si-64 system: (a). Energy error, 800K; (b). Force error, 800K; (c). Energy error, 1100K; (d). Force error, 1100K. }
  \label{figD1}
\end{figure}

Figure \ref{figD1} displays the convergence behavior of ALKPU on two validation sets of the Si-64 system, where the two validation sets are generated by DFT calculations of 1000 configurations in the AIMD trajectories at 800K and 1100K, respectively, and the convergence is measured by the average of 1000 root mean square errors (RMSE) for the 1000 configurations. The convergence behavior for Al-108 and Ni-108 are similar, and thus we do not display them anymore.


\begin{figure}[htbp]
  \centering
  \subcaptionbox{ \label{4a}}{\includegraphics[width=0.4\textwidth]{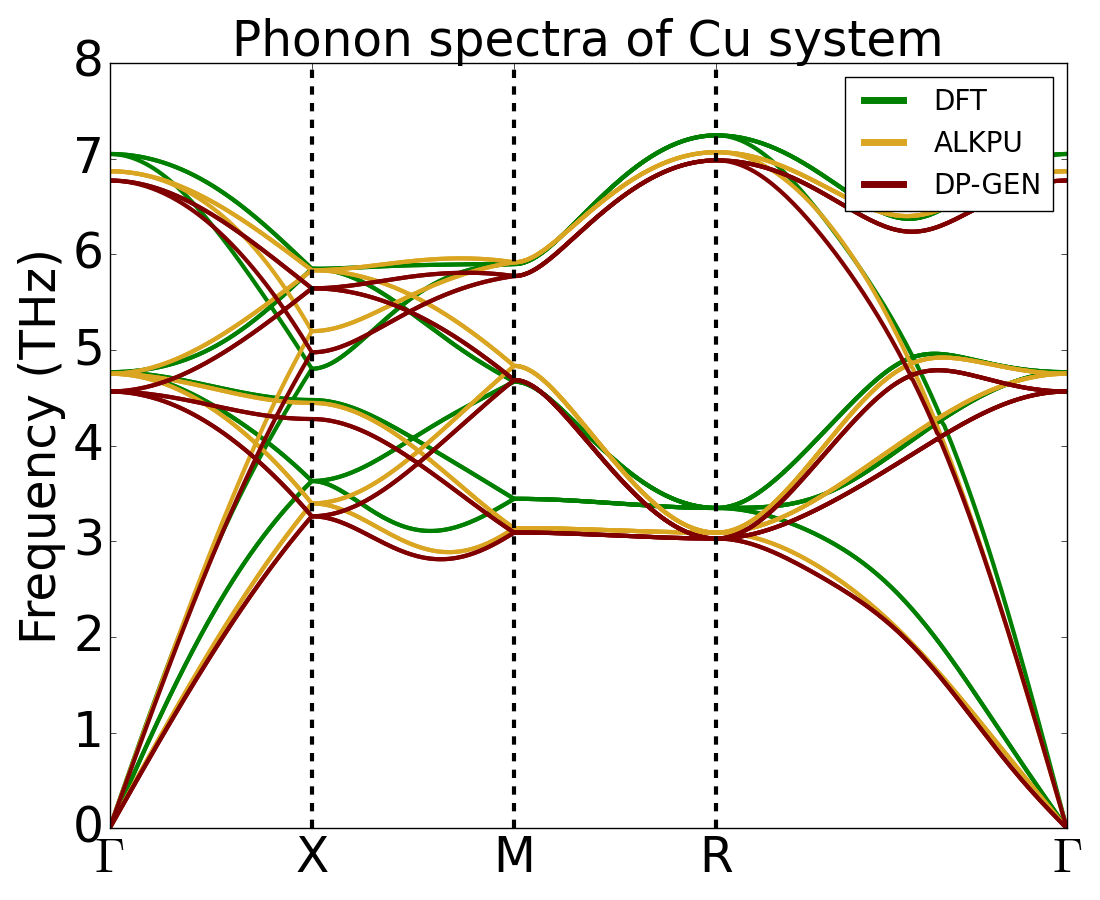}}\hspace{10mm}
  \subcaptionbox{ \label{4b}}{\includegraphics[width=0.4\textwidth]{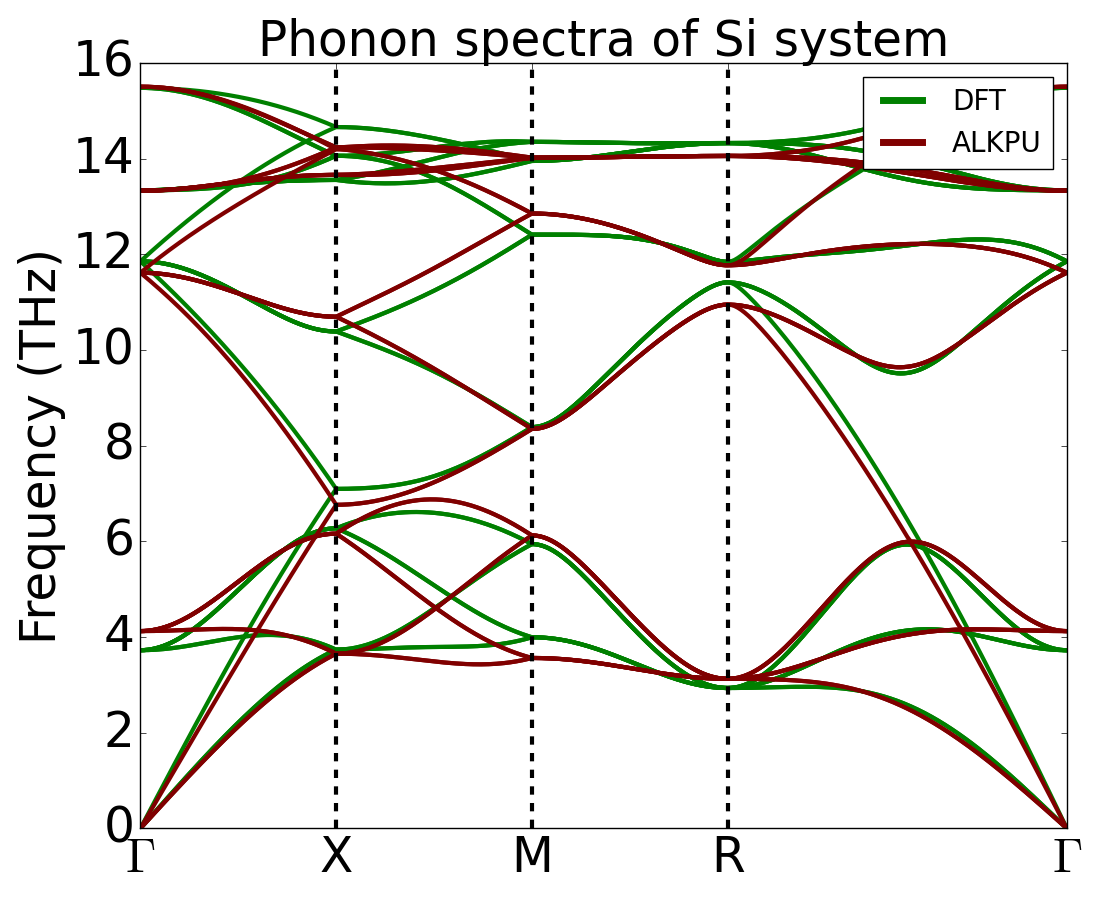}}%
  \vfill
  \subcaptionbox{ \label{4c}}{\includegraphics[width=0.4\textwidth]{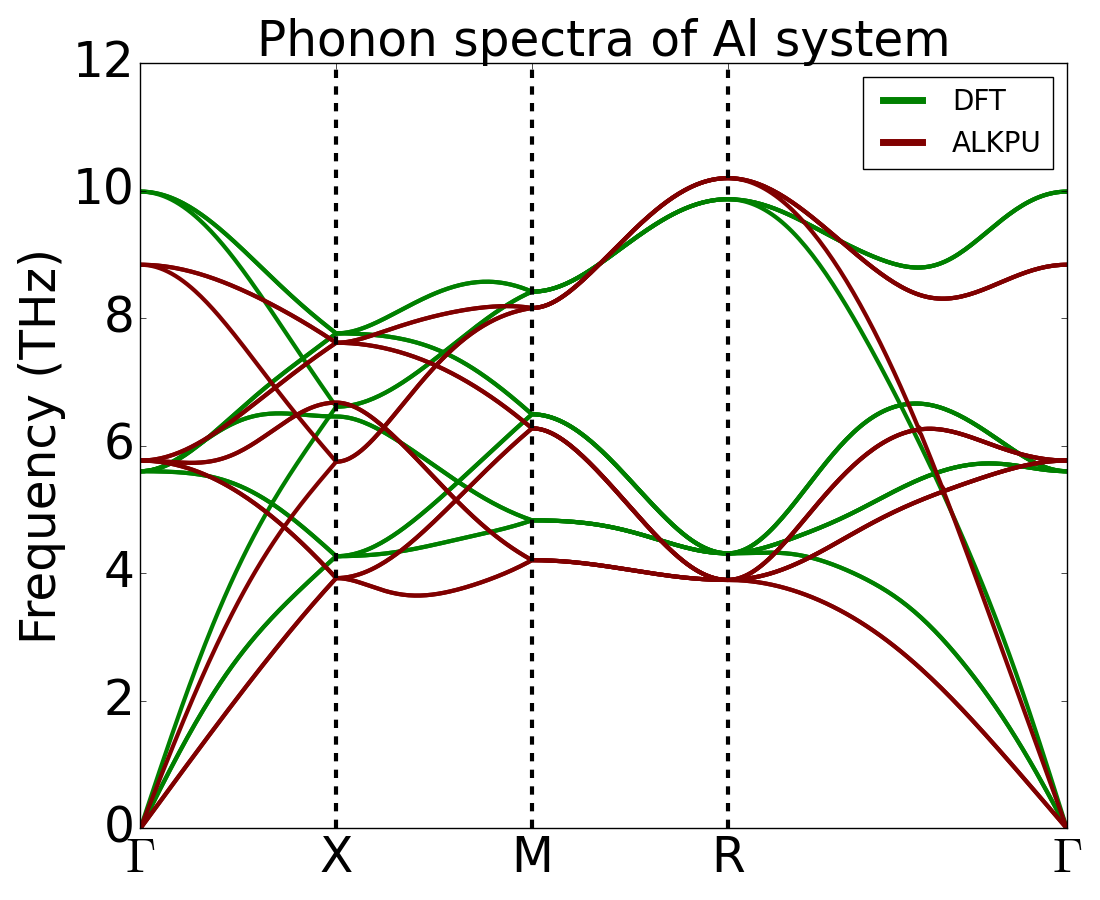}}\hspace{10mm}
  \subcaptionbox{ \label{4d} }{\includegraphics[width=0.4\textwidth]{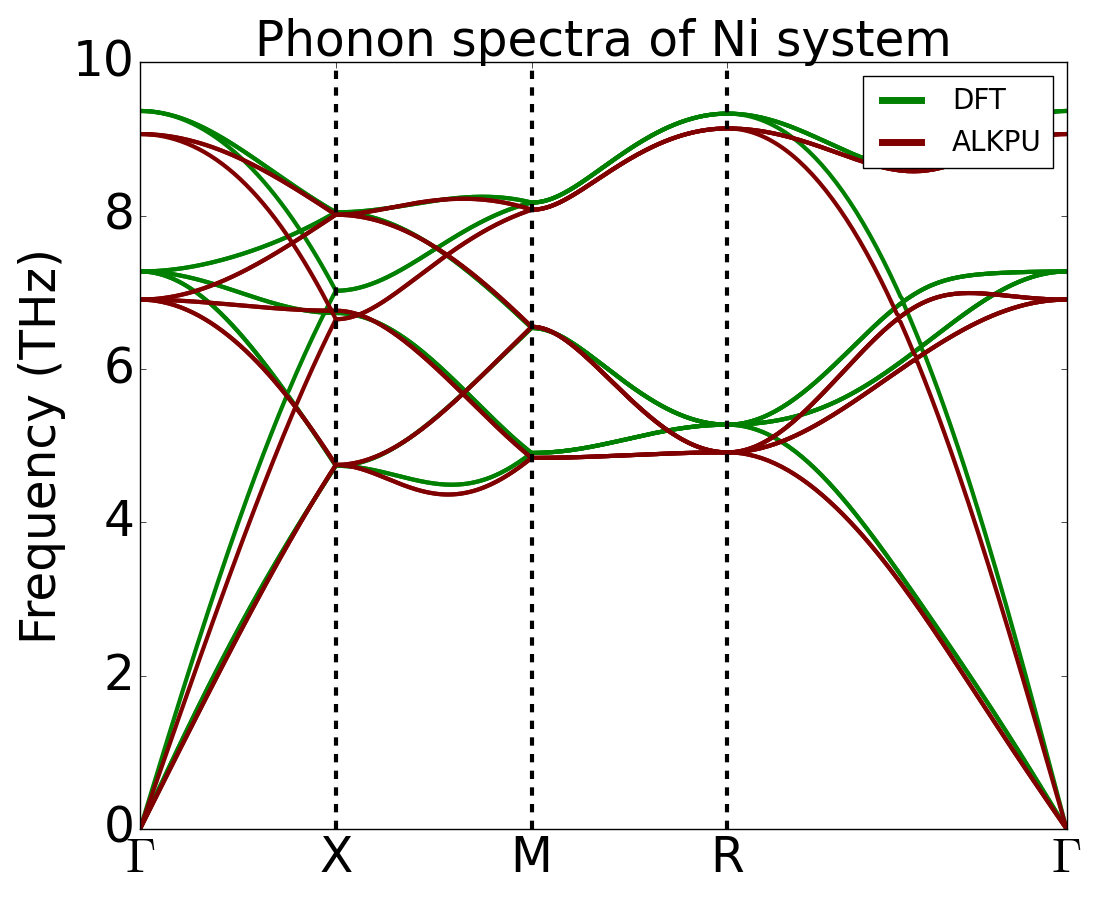}}%
  \caption{Comparison of phonon spectra calculated by DFT and DeePMD with ALKPU: (a). Cu-108, FCC; (b). Si-64, diamond; (c). Al-108, FCC; (d). Ni-108, FCC.}
  \label{fig4}
\end{figure}

In order to test the final trained model, we compare the photon spectra calculated by DFT and DeePMD with ALKPU for Cu, Si, Al and Ni systems. These test systems are chosen from the material project dataset \cite{jain_commentary_2013}. Figure \ref{fig4} shows that the trained model reproduces well the DFT results for photon spectra of the four systems (the results of Al are slightly different). We also compare the DP-GEN and ALKPU trained models for calculating Cu's photon spectra in Figure \ref{fig4} (a), and the two results are very similar. 

Overall, the test results indicate that ALKPU can effectively select representative data points to cover the configuration space during training. It has great potential for improving training efficiency to build a final usable model and can significantly accelerate MD simulations while maintaining \textit{ab initio} accuracy. In our future work, the ALKPU method will be tested on more complex systems with much larger scales to further explore its capabilities.

\section{Conclusion and outlook}\label{sec6}
To enhance the extrapolation capability and improve training efficiency for DeePMD, we have proposed ALKPU, an active learning method that can concurrently select representative configurations for labelling during model training. his approach uses the efficient RLEKF optimizer based on a Kalman filter and introduces Kalman Prediction Uncertainty (KPU) to quantify model prediction uncertainty. We developed an efficient algorithm to calculate KPU for atomic forces and an ALKPU sampling platform that concurrently explores, selects, and labels representative configurations during training. We have shown that ALKPU is essentially a locally fastest reduction procedure of model's uncertainty. We test the ALKPU method with various physical system simulations and compare it with DeePMD's original DP-GEN active learning method. Our results show that ALKPU improves training efficiency and reduces computational overhead, offering significant advantages as a new active learning approach.

Although ALKPU is designed for DeePMD, KPU is a general metric for quantifying uncertainty in NN predictions. When combined with an efficient Kalman filter-based optimizer, ALKPU can be applied to other "AI for Science" problems where model training is time-consuming and acquiring training data is computationally expensive.

\appendix
\section{The extended Kalman filter and RLEKF optimizer}\label{sec:A}
We first review the basic algorithm of the extended Kalman filter (EKF) and then derive the EKFfm algorithm. For notational simplicity, some notations in \eqref{DSD}--\eqref{eq:KFsupdate} may conflict with the notations in the main text, but the readers can easily find the differences between them.

\paragraph*{EKF}
Write a discrete stochastic dynamical system with observations as
\begin{align}\label{DSD}
  \begin{cases}
    s_{t}=f(s_{t-1}, u_t)+\xi_t, \ \ s_0 \sim \mathcal{N}(s_0, P_0),  \\
    y_{t}=h(s_{t}, u_t)+\eta_{t}, \ \ j=1,2\dots
  \end{cases}
\end{align}
where $f\in C^1(\mathbb{R}^n, \mathbb{R}^n)$, $h\in C^1(\mathbb{R}^n, \mathbb{R}^m)$ and $\xi_t\sim\mathcal{N}(0,Q_t)$, $\eta_t\sim\mathcal{N}(0,R_t)$ are noises. The first equation describes the evolution of the system with initial state $s_0$, while the second equation describes the observations, and it is usually assumed that $s_0$, $\{\xi_t\}_{t\in\mathbb{N}}$ and $\{\eta_t\}_{t\in\mathbb{N}}$ are independent. The Kalman filter is an iterative algorithm that estimates the distribution $\mathbb{P}\left(s_t|\{y_i\}_{i=1}^{i=t}\right)$ by computing its expectation and covariance \cite{kalman1960contributions,chui2017kalman}. For nonlinear mappings $f$ and $g$, the EKF applies the linear Kalman filter to a linear Gaussian system approximated by using one-order Taylor expansion, and the output is an approximation to the real distribution of $\mathbb{P}\left(s_t|\{y_i\}_{i=1}^{t}\right)$.

Given the initial state $s_0 \sim \mathcal{N}(s_0, P_0)$, for $t=1,2,\dots$, the prediction step is 
\begin{align}
F_{t-1} &= \mathrm{D}_{s} f(\hat{s}_{t-1},u_{t}),  \label{eq:transP}  \\
\bar{s}_{t} &= f(\hat{s}_{t-1},u_{t}), \label{eq:KFdefF} \\
\bar{P}_{t} &=  F_{t-1}P_{t-1}F_{t-1}^{T} + Q_{t}.
\end{align}
The correction step after observing $y_{t}$ is
\begin{align}
H_t &= \mathrm{D}_{s}h(\bar{s}_{t},u_{t}), \label{eq:defH} \\
A_t &= H_{t} \bar{P}_{t}H_{t}^{T}+R_{t}, \label{eq:defG} \\ 
K_{t} &= \bar{P}_{t}H_{t}^{T}A_{t}^{-1}, \label{eq:KFdefK} \\
P_{t} &= (I-K_{t}H_{t})\bar{P}_{t}, \label{eq:KFPupdate}  \\
d_{t} &=  y_t - h(\bar{s}_{t},u_{t}), \label{eq:defD} \\
\hat{s}_{t} &= \bar{s}_{t} + K_{t}d_{t}. \label{eq:KFsupdate}
\end{align}
The desired outputs are the expectation $\hat{s}_t$ and covariance $P_t$.

\paragraph*{Derivation of EKFfm}
Notations in this derivation follow those in the main text. By the EKF algorithm, we can estimate the state $\boldsymbol{v}_t$ and then let $\hat{\boldsymbol{w}}_t=\alpha_{t}^{-1}\hat{\boldsymbol{v}}_t$. First, we have
$\bar{\boldsymbol{v}}_{t}=\lambda_{t}^{-1/2}\hat{\boldsymbol{v}}_{t-1}$ and $F_{t-1}=\lambda_{t}^{-1/2}I$, and thus $\bar{P}_{t}=\lambda_{t}^{-1}P_{t-1}$. Then by \eqref{eq:defH} we have
\[\tilde{H}_t := \mathrm{D}_{\boldsymbol{v}}h(\alpha_{t}^{-1}\bar{\boldsymbol{v}}_{t},x_{t})=
\alpha_{t}^{-1}\mathrm{D}_{\boldsymbol{w}}h(\hat{\boldsymbol{w}}_{t-1},x_{t})=\alpha_{t}^{-1}H_t\]
since $\alpha_{t}^{-1}\bar{\boldsymbol{v}}_{t}=\alpha_{t}^{-1}\lambda_{t}^{-1/2}\alpha_{t-1}\hat{\boldsymbol{w}}_{t-1}=\hat{\boldsymbol{w}}_{t-1}$. By \eqref{eq:defG} we have
\[\tilde{A}_t := \tilde{H}_{t}\bar{P}_{t}\tilde{H}_{t}^{T}+R_t = 
\alpha_{t}^{-2}(H_{t}\bar{P}_{t}H_{t}^{T}+\alpha_{t}^{2}R_t) = \alpha_{t}^{-2}A_t .\]
By \eqref{eq:KFdefK} we have
\[\tilde{K}_t = \bar{P}_{t}\tilde{H}_{t}^{T}\tilde{A}_{t}^{-1}
= \alpha_{t}(\bar{P}_{t}H_{t}^{-1}A_{t}^{-1}) = \alpha_{t}K_t .\]
Therefore, the estimated covariance of $\boldsymbol{v}_t|\{y_i\}_{i=1}^{t}$ is
\[P_t = (I-\tilde{K}_t\tilde{H}_t)\bar{P}_{t} = (I-K_t H_t)\bar{P}_t .\]
By \eqref{eq:defD} we have
\[d_t = y_t-h(\alpha_{t}^{-1}\bar{\boldsymbol{v}}_{t}, x_t)=y_t-h(\hat{\boldsymbol{w}}_{t-1}, x_t) .\]
Therefore, the estimated expectation of $\boldsymbol{v}_t|\{y_i\}_{i=1}^{t}$ is
\[\hat{\boldsymbol{v}}_t =\bar{\boldsymbol{v}}_{t}+\tilde{K}_t d_t = \lambda_{t}^{-1/2}\hat{\boldsymbol{v}}_{t-1}+\alpha_{t}K_{t}d_t,\]
which finally leads to
\[\hat{\boldsymbol{w}}_t = \alpha_{t}^{-1}\hat{\boldsymbol{v}}_t = \alpha_{t}^{-1}\lambda_{t}^{-1/2}\alpha_{t-1}\hat{\boldsymbol{w}}_{t-1}+K_{t}d_t 
= \hat{\boldsymbol{w}}_{t-1}+K_{t}d_t .\]
One can check the correspondance of computations in the above and in Algorithm \ref{EKFfm}. From the derivation we known that the estimated covariance of $\boldsymbol{w}$ at the $t$-th step is 
\[\hat{P}_t = \alpha_{t}^{-2}P_t .\]

\paragraph*{RLEKF optimizer}
The RLEKF approximates dense matrices in EKFfm with sparse diagonal block ones to reduce the computational cost of matrix-matrix/vector multiplications. Technically, all the layers of the NN are reorganized by splitting big and gathering adjacent small layers. 

The layerwise weight-updating procedure of RLEKF for training DeePMD in shown in Algorithm \ref{RLEKF2}. In lines 1--6, we first transform a vector $Y$ into a scalar such that the prediction of the NN model has dimension 1. \texttt{split} is a function for splitting and reorganizing layers, while \texttt{gather} is a function for gathering all reorganized layers together. Some computations have been modified to make the practical implementation more robust; for details see \cite{hu2022rlekf}.

\begin{algorithm}
  \caption{\texttt{RLEKF}$(\boldsymbol{Y}, \boldsymbol{Y}_{\text{DFT}}, \boldsymbol{w}^{in}, P^{in}, \lambda^{in})$}\label{RLEKF2}
  \begin{algorithmic}[1]
    \Require $\boldsymbol{w}^{in}, P^{in}, \lambda^{in}$
    \For{$j =1,2,\dots, \mathtt{length}(\boldsymbol{Y})$} 
      \If{$Y_{j} \geq Y_{j,\text{DFT}}$}
        \State $Y_{j}=-Y_{j}$; \ \ $Y_{j,\text{DFT}}=-Y_{j,\text{DFT}}$
      \EndIf
    \EndFor
    \State $Y =\sum_{j}Y_{j}/\mathtt{length}(\boldsymbol{Y})$; \ \
    $\mathrm{er} =\sum_{j}(Y_{j,\text{DFT}}-Y_{j})/\mathtt{length}(\boldsymbol{Y})$
    \State $H =\nabla_{\boldsymbol{w}}Y\mid_{w=w^{in}}$  \Comment{Backward propagation}
    \State $\alpha = 1/(\lambda^{in}L + H^{T}P^{in}H)$  \Comment{$L$ is the number of reorganized layers}
    \State $\{P_{1}^{in},\dots,P_{L}^{in}\}=\mathtt{split}(P^{in}); \ \ \{\boldsymbol{w}_{1}^{in},\dots,\boldsymbol{w}_{L}^{in}\}=\mathtt{split}(P^{in}); \ \ 
    \{H_{1},\dots,H_{L}\}=\mathtt{split}(H)$ \Comment{Reorganizing by layers}
    \For{$l =1,2,\dots, L$}
    \State $K = P^{in}_{l}H_{l}$  
    \State $P^{out}_{l}= (P^{in}_{l}-\alpha KK^{T})/\lambda^{in} $  
    \State $\boldsymbol{w}^{out}_{l} = \boldsymbol{w}^{in}_{l}+ \alpha\mathrm{er}KY$
    \EndFor 
    \State $\boldsymbol{w}^{out} = \mathtt{gather}(\boldsymbol{w}^{out}_{1},\dots, \boldsymbol{w}^{out}_{L}) $
    \State $P^{out} =\mathtt{gather}(P^{out}_{1},\dots, P^{out}_{L}) $  \Comment{Gathering all layers}
    \State $\lambda^{out} = \lambda^{in} \nu +1- \nu $ 
    \Ensure $\boldsymbol{w}^{out}, P^{out}, \lambda^{out} $ 
  \end{algorithmic}
\end{algorithm}

The overall structure of the training procedure is outlined in Algorithm \ref{RLEKF}. The algorithm begins by initializing $\hat{\boldsymbol{w}}_0$, $P_0$ and $\lambda_1$. At the $t$-th step, the energy process fits the predicted total energy to the labeled energy of configuration $\mathcal{R}^{(t)}$ by nonlinear NN mapping $h_{E}$, while the force process fits the predicted atom forces to the labeled forces by nonlinear mapping $h_{F}$ that is the gradient of $h_{E}$ with respect to coordinates. During the force process, a set of $n_a$ atoms is randomly chosen from the configuration and the corresponding force vectors are concatenated into a scalar (i.e. lines 1--6 of Algorithm \ref{RLEKF2}). This process (lines 4--5) is repeated $n_F$ times in one step. In the practical implementation we set $n_a=6$ and $n_F=4$. 

\begin{algorithm}
  \caption{High-level structure of training by RLEKF}\label{RLEKF}
  \begin{algorithmic}[1]
    \Require $\hat{\boldsymbol{w}}_0$, $P_0$, $\lambda_1$, $\nu$, training data $\{\mathcal{R}^{(t)}, E_{\text{DFT}}^{(t)}, \{\boldsymbol{F}_{i,\text{DFT}}^{(t)}\}\}_{t\in\mathbb{N}}$
    \For{$t =1, 2,\dots,T$}
    \State $E = h_E(\hat{\boldsymbol{w}}_{t-1}, \mathcal{R}^{(t)})$  \Comment{Predicted total energy}
    \State $\hat{\boldsymbol{w}}_{+}, P_{+}, \lambda_{+} =\mathtt{RLEKF}(E, E_{\text{DFT}}^{(t)}, \hat{\boldsymbol{w}}_{t-1}, \mathbf{P}_{t-1}, \lambda_{t})$ \Comment{Training by energy}
    \State $\{\boldsymbol{F}_{i}\} = h_{F}(\hat{\boldsymbol{w}}_{+}, \mathcal{R}^{(t)})$  \Comment{Predicted forces}
    \State $\hat{\boldsymbol{w}}_{t}, P_{t}, \lambda_{t+1} =\mathtt{RLEKF}(\{\boldsymbol{F}_{i}\}, \{\boldsymbol{F}_{i,\text{DFT}}^{(t)}\}, w_{+}, \mathbf{P}_{+}, \lambda_{+})$  \Comment{Training by forces}
    \EndFor
    \Ensure $\hat{\boldsymbol{w}}_T$, $P_T$, $\lambda_{T+1}$
  \end{algorithmic}
\end{algorithm}

\section{Shannon entropy of the multivariate Gaussian distribution}\label{sec:B}
A vector-valued random variable $X=[X_1, \dots, X_n]^T\in\mathbb{R}^n$ is said to have a multivariate Gaussian distribution if its probability density function is given by
\begin{equation}
  p(x) = \frac{1}{(2\pi)^{n/2}(\det\Sigma)^{1/2}}\exp{\Big(-\frac{1}{2}(x-\mu)^{T}\Sigma^{-1}(x-\mu)\Big)},
\end{equation}
with the expectation (or mean) $\mu\in\mathbb{R}^{n}$ and covariance $\Sigma\in\mathbb{R}^{n\times n}$ where $\Sigma$ is semi-definite. We write it as $X\sim\mathcal{N}(\mu,\Sigma)$.

The Shannon entropy of $X$ can be calculated as follows. First,
\begin{align*}
  H(X) &= -\int_{\mathbb{R}^n}p(x)\ln p(x)\mathrm{d}x \\
       &= -\int_{\mathbb{R}^n}[\ln((2\pi)^{n/2}(\det\Sigma)^{1/2})+
       \frac{1}{2}(x-\mu)^{T}\Sigma^{-1}(x-\mu)]p(x)\mathrm{d}x \\
       &= \frac{n}{2}\ln(2\pi)+\frac{1}{2}\ln(\det \Sigma)+\frac{1}{2}\mathbb{E}_{x\sim p(x)}[(x-\mu)^{T}\Sigma^{-1}(x-\mu)].
\end{align*}
Note that $(x-\mu)^{T}\Sigma^{-1}(x-\mu)$ is a scalar. Thus we have
\begin{align*}
  \mathbb{E}_{x\sim p(x)}[(x-\mu)^{T}\Sigma^{-1}(x-\mu)]
  &= \mathbb{E}_{x\sim p(x)}[\mathrm{tr}((x-\mu)^{T}\Sigma^{-1}(x-\mu))] \\
  &= \mathbb{E}_{x\sim p(x)}[\mathrm{tr}(\Sigma^{-1}(x-\mu)(x-\mu)^{T})] \\
  &= \mathrm{tr}(\Sigma^{-1}\mathbb{E}_{x\sim p(x)}[(x-\mu)(x-\mu)^{T}]) \\
  &= \mathrm{tr}(\Sigma^{-1}\Sigma) \\
  &= n.
\end{align*}
Therefore, we finally obtain
\begin{equation}\label{entropy}
  H(X) = \frac{n}{2}(1+\ln(2\pi))+\frac{1}{2}\ln(\det \Sigma) .
\end{equation}
Note that for the $n$ dimensional Gaussian random vectors, the Shannon entropy depends only on its covariance but not its mean.

\section{Proofs}\label{sec:C}
\begin{proof}[Proof of Theorem \ref{ONG}]
  In Algorithm \ref{EKFfm}, at the $t$-th step we have
  \begin{align*}
    P_t &= \bar{P}_{t}- \bar{P}_{t}H_{t}^{T}(H_{t}\bar{P}_{t}H_{t}^{T}+\alpha_{t}^{2}R_{t})^{-1}H_{t}\bar{P}_{t} \\
        &= (\bar{P}_{t}^{-1}+H_{t}^{T}\alpha_{t}^{-2}R_{t}^{-1}H_{t})^{-1} \\
        &= (\lambda_{t}P_{t-1}^{-1}+H_{t}^{T}\alpha_{t}^{-2}R_{t}^{-1}H_{t})^{-1},
  \end{align*}
  where the second equality is obtained by using the Sherman-Morrison-Woodbury formula, which leads to
  \begin{equation}\label{inv_P}
    P_{t}^{-1} = \lambda_{t}P_{t-1}^{-1} + \alpha_{t}^{-2}H_{t}^{T}R_{t}^{-1}H_{t} .
  \end{equation}
  
  Let $\hat{y}_t=h(\hat{\boldsymbol{w}}_{t-1},x_t)$ and $\ell_{t}(\boldsymbol{w})=\ln p(y_t|\boldsymbol{w})$. Since $y_t=h(\boldsymbol{w},x_t)+\eta_t$ with $\eta_t\sim\mathcal{N}(0,R_t)$, we have
  \begin{equation*}
    p(y_t|\hat{\boldsymbol{w}}_{t-1}) \propto \exp{\Big(-\frac{1}{2}(y_t-\hat{y}_t)^{T}R_{t}^{-1}(y_t-\hat{y}_t)\Big)},
  \end{equation*}
  and thus 
  \begin{equation}\label{ellt}
    \ell_{t}(\hat{\boldsymbol{w}}_{t-1}) = -\frac{1}{2}(y_t-\hat{y}_t)^{T}R_{t}^{-1}(y_t-\hat{y}_t) + C_1
  \end{equation}
  with $C_1$ a constant number. Notice that 
  \begin{align*}
    \nabla_{\boldsymbol{w}} \ell_{t}(\hat{\boldsymbol{w}}_{t-1})
    = H_{t}^{T}\nabla_{\hat{y}}\ell_{t}\big|_{\hat{y}_t} ,
  \end{align*}
  where $\nabla_{\hat{y}}\ell_{t}$ is the gradient with respect to the middle variable $\hat{y}$. Therefore 
  \begin{equation*}
    \mathbb{E}_{y_t\sim p(y_t|\hat{\boldsymbol{w}}_{t-1})}\Big [\nabla_{\boldsymbol{w}}\ell_{t}(\hat{\boldsymbol{w}}_{t-1})^{\otimes 2} \Big ] 
    = H_{t}^T\mathbb{E}_{y_t}\Big [\Big(\nabla_{\hat{y}}\ell_{t}\big|_{\hat{y}_t}\Big)^{\otimes 2} \Big ]H_{t}.
  \end{equation*}
  By \eqref{ellt} we have
  \begin{align*}
    \mathbb{E}_{y_t}\Big[\Big(\nabla_{\hat{y}}\ell_{t}\big|_{\hat{y}_t}\Big)^{\otimes 2}\Big]
    &= \mathbb{E}_{y_t}[R_{t}^{-1}(y_t-\hat{y}_t)] [R_{t}^{-1}(y_t-\hat{y}_t)]^T  \\
    &= R_{t}^{-1}\mathbb{E}_{y_t}[(y_t-\hat{y}_t)(y_t-\hat{y}_t)^T] R_{t}^{-1} \\
    &= R_{t}^{-1} .
  \end{align*}
  Combining with \eqref{inv_P} we obtain
  \begin{equation}
    P_{t}^{-1} = \lambda_{t}P_{t-1}^{-1} + \alpha_{t}^{-2}\mathbb{E}_{y_t\sim p(y_t|\hat{\boldsymbol{w}}_{t-1})}\Big [\nabla_{\boldsymbol{w}}\ell_{t}(\hat{\boldsymbol{w}}_{t-1})^{\otimes 2}\Big ].
  \end{equation}
  By defining $J_t=\gamma_{t}P_t^{-1}$ we get
  \[J_t = \lambda_{t}\frac{\gamma_t}{\gamma_{t-1}}J_{t-1} +
  \frac{\gamma_t}{\alpha_{t}^2}\mathbb{E}_{y_t}\Big[\nabla_{\boldsymbol{w}}\ell_{t}(\hat{\boldsymbol{w}}_{t-1})^{\otimes 2}\Big] .\]
  By letting $\lambda_{t}\frac{\gamma_t}{\gamma_{t-1}}+\frac{\gamma_t}{\alpha_{t}^2}=1$ we finally obtain \eqref{Jt}--\eqref{gamma}.
\end{proof}

\begin{proof}[Proof of Proposition \ref{limit}]
  First we prove $\lim_{t\rightarrow \infty}\lambda_{t} = 1$ and 
  \begin{equation}
    \lim_{t\rightarrow \infty}\alpha_{t}^{-2} = C_2
  \end{equation}
  with $C_2$ a positive number. We have
  \begin{align*}
    \lambda_{t}
    &= 1-\nu + \nu\lambda_{t-1} = 1-\nu + \nu(1-\nu+\nu\lambda_{t-2}) \\
    &= \cdots \\
    &= (1-\nu)(1+\nu+\cdots+\nu^{t-2}) + \nu^{t-1}\lambda_1 \\
    &= 1-\nu^{t-1} + \nu^{t-1}\lambda_1 \longrightarrow 1, \ \ t\rightarrow \infty .
  \end{align*}
  Note that $\alpha_{t}^{-2} = \lambda_1\lambda_2\cdots\lambda_t$, which leads to
  \begin{align*}
    \ln \alpha_{t}^{-2} 
    &= \sum_{i=1}^{t}\ln(1-\nu^{i-1}(1-\lambda_1)) \\
    &\geq \sum_{i=1}^{t}\frac{\nu^{i-1}(1-\lambda_1)}{\nu^{i-1}(1-\lambda_1)-1} \\
    &\geq \sum_{i=1}^{t}\frac{\nu^{i-1}(1-\lambda_1)}{-\lambda_1} \\
    &= -\frac{1-\lambda_1}{\lambda_1}\frac{1-\nu^{t}}{1-\nu} \longrightarrow -\frac{1-\lambda_1}{\lambda_1(1-\nu)}, \ \ t\rightarrow \infty,
  \end{align*}
  where we have used the inequality $\ln(1-x)\geq x/(x-1)$. Since $\ln \alpha_{t}^{-2}$ deceases and have a finite lower bound, it follows that $\alpha_{t}^{-2}$ has a positive limit. By \eqref{gamma} we have 
  \begin{align*}
    \frac{1}{\gamma_t} 
    &= \alpha_{t}^{-2}+\lambda_t\frac{1}{\gamma_{t-1}} = \alpha_{t}^{-2}+\lambda_t(\alpha_{t-1}^{-2}+\lambda_{t-1}\frac{1}{\gamma_{t-2}}) \\
    &= \cdots \\
    &= \alpha_{t}^{-2}+\lambda_{t}\alpha_{t-1}^{-2}+\lambda_{t}\lambda_{t-1}\alpha_{t-2}^{-2}+\cdots+\lambda_{t}\lambda_{t-1}\cdots\lambda_{2}\alpha_{1}^{-2}+\lambda_{t}\lambda_{t-1}\cdots\lambda_{1}\frac{1}{\gamma_{0}} \\
    &= t\alpha_{t}^{-2} + \alpha_{t}^{-2}\frac{1}{\gamma_{0}} \longrightarrow +\infty, \ \ t\rightarrow \infty
  \end{align*}
  due to $\lim_{t\rightarrow \infty}\alpha_{t}^{-2} = C_2>0$. Thus we have $\lim_{t\rightarrow \infty}\gamma_t = 0$ and
  \begin{equation*}
    \lim_{t\rightarrow \infty}\beta_t=\frac{\lim_{t\rightarrow \infty}\lambda_t}{\lim_{t\rightarrow\infty} (\lambda_t+\alpha_{t}^{-2}\gamma_{t-1})}
    = \frac{1}{1+C_2\cdot 0} = 1,
  \end{equation*}
  which are the desired results.
\end{proof}

\begin{proof}[Proof of Theorem \ref{kpu_gain}]
  Since $y=h(\boldsymbol{w},x)+\eta$ with $\eta\sim\mathcal{N}(0,\kappa^{-1}I)$, by the Bayes' formula we have
  \begin{align*}
    p(\boldsymbol{w}|y)
     &= \frac{p(y|\boldsymbol{w})p(\boldsymbol{w})}{p(y)} \\
     &\propto \exp{\left(-\frac{\kappa}{2}(y-h(\boldsymbol{w},x))^T (y-h(\boldsymbol{w},x))\right)}\exp{\left(-\frac{1}{2}(\boldsymbol{w}-\hat{\boldsymbol{w}}_t)^T\hat{P}_{t}^{-1}(\boldsymbol{w}-\hat{\boldsymbol{w}}_t)\right)} .
  \end{align*}
  Under the assumption \eqref{approx}, $\ln p(\boldsymbol{w}|y)$ is a quadratic form of $\boldsymbol{w}$, thus $\boldsymbol{w}|y$ is Gaussian. Suppose $\boldsymbol{w}|y\sim\mathcal{N}(\widehat{\boldsymbol{w}}_t, \widehat{P}_t)$. Equating the quadratic terms of $\boldsymbol{w}$ in $\ln p(\boldsymbol{w}|y)$ gives 
  \begin{equation}\label{P_gain}
    \widehat{P}_{t}^{-1} = \hat{P}_{t}^{-1}+\kappa\mathrm{D}_{\boldsymbol{w}}h(\hat{\boldsymbol{w}}_t,x)\mathrm{D}_{\boldsymbol{w}}h(\hat{\boldsymbol{w}}_t,x)^{T} .
  \end{equation}
  Therefore, by \eqref{entropy} the change of entropy of $\boldsymbol{w}$ is
  \begin{equation}\label{change}
    S_t=\frac{1}{2}\ln(\det \hat{P}_t) - \frac{1}{2}\ln(\det \widehat{P}_t)
    = \frac{1}{2}\ln\Big(\frac{\det\widehat{P}_{t}^{-1}}{\det \hat{P}_{t}^{-1}} \Big) .
  \end{equation}
  By the Weinstein--Aronszajn identity for matrix determinant, we obtain from \eqref{P_gain}
  \begin{align*}
    \det\widehat{P}_{t}^{-1} &= \det\hat{P}_{t}^{-1} \cdot \det\left(I + \kappa\hat{P}_{t}\mathrm{D}_{\boldsymbol{w}}h(\hat{\boldsymbol{w}}_t,x)\mathrm{D}_{\boldsymbol{w}}h(\hat{\boldsymbol{w}}_t,x)^{T} \right) \\
    &= \det\hat{P}_{t}^{-1} \cdot \det\left(I + \kappa\mathrm{D}_{\boldsymbol{w}}h(\hat{\boldsymbol{w}}_t,x)^{T}\hat{P}_{t}\mathrm{D}_{\boldsymbol{w}}h(\hat{\boldsymbol{w}}_t,x) \right)  \\
    &= \det \hat{P}_{t}^{-1} \cdot \det\left(I + \kappa\alpha_{t}^{-2}\mathrm{KPU}_{t}(x) \right) .
  \end{align*}
  Combining with \eqref{change} we obtain the expression of $S_t$.
\end{proof}

\begin{proof}[Proof of Theorem \ref{kpu_loss}]
  Let $\hat{y}=h(\boldsymbol{w},x)$. Then
  \begin{align*}
    \mathbb{E}_{\boldsymbol{w}}[|y-\hat{y}_t|^2] 
    &= \mathbb{E}_{\boldsymbol{w}}[|y-\hat{y}|^2] + \mathbb{E}_{\boldsymbol{w}}[|\hat{y}-\hat{y}_t|^2] + 2\mathbb{E}_{\boldsymbol{w}}[(y-\hat{y})(\hat{y}-\hat{y}_t)] \\
    &= \mathbb{E}_{\boldsymbol{w}}[|y-\hat{y}|^2] + \mathbb{E}_{\boldsymbol{w}}[|\hat{y}-\hat{y}_t|^2] + 2\eta \mathbb{E}_{\boldsymbol{w}}[h(\boldsymbol{w},x)-h(\hat{\boldsymbol{w}}_t,x)] .
  \end{align*}
  Using the approximation $h(\boldsymbol{w},x)-h(\hat{\boldsymbol{w}}_t,x)\approx\nabla_{\boldsymbol{w}}h(\hat{\boldsymbol{w}}_t,x)^T(\boldsymbol{w}-\hat{\boldsymbol{w}}_t)$ we get
  \begin{align*}
    \mathbb{E}_{\boldsymbol{w}}[h(\boldsymbol{w},x)-h(\hat{\boldsymbol{w}}_t,x)]
    &= \mathbb{E}_{\boldsymbol{w}}[\nabla_{\boldsymbol{w}}h(\hat{\boldsymbol{w}}_t,x)^T(\boldsymbol{w}-\hat{\boldsymbol{w}}_t)] \\
    &= \nabla_{\boldsymbol{w}}h(\hat{\boldsymbol{w}}_t,x)^T\mathbb{E}_{\boldsymbol{w}}[\boldsymbol{w}-\hat{\boldsymbol{w}}_t] \\
    &= 0 .
  \end{align*}
  Therefore, we have
  \begin{equation}\label{expec}
    \mathbb{E}_{\boldsymbol{w}}[|y-\hat{y}_t|^2] 
    = \mathbb{E}_{\boldsymbol{w}}[|y-\hat{y}|^2] + \mathbb{E}_{\boldsymbol{w}}[|\hat{y}-\hat{y}_t|^2] \geq \mathbb{E}_{\boldsymbol{w}}[|\hat{y}-\hat{y}_t|^2].
  \end{equation}
  Using again the approximation $h(\boldsymbol{w},x)-h(\hat{\boldsymbol{w}}_t,x)\approx\nabla_{\boldsymbol{w}}h(\hat{\boldsymbol{w}}_t,x)^T(\boldsymbol{w}-\hat{\boldsymbol{w}}_t)$
  we obtain
  \begin{align*}
    \mathbb{E}_{\boldsymbol{w}}[|\hat{y}-\hat{y}_t|^2] 
    &= \mathbb{E}_{\boldsymbol{w}}[|h(\boldsymbol{w},x)-h(\hat{\boldsymbol{w}}_t,x)|^2] \\
    &= \nabla_{\boldsymbol{w}}h(\hat{\boldsymbol{w}}_t,x)^{T}\mathbb{E}_{\boldsymbol{w}}[(\boldsymbol{w}-\hat{\boldsymbol{w}}_t)(\boldsymbol{w}-\hat{\boldsymbol{w}}_t)^{T}]\nabla_{\boldsymbol{w}}h(\hat{\boldsymbol{w}}_t,x) \\
    &= \nabla_{\boldsymbol{w}}h(\hat{\boldsymbol{w}}_t,x)^{T}\hat{P}_{t}\nabla_{\boldsymbol{w}}h(\hat{\boldsymbol{w}}_t,x) \\
    &= \alpha_{t}^{-2}\mathrm{KPU}_{t}(x) .
  \end{align*}
  Combining the above equality with \eqref{expec}, the desired result is obtained.
\end{proof}

{\small
\bibliographystyle{abbrv}  
\bibliography{ref}
}

\end{document}